\newtheorem{Lemma}{Lemma}
\newtheorem{Theorem}{Theorem}
\newtheorem{Proposition}{Proposition}
\newtheorem{Remark}{Remark}
\newtheorem{Corollary}{Corollary}
\newcommand{\removelatexerror}{\let\@latex@error\@gobble}
\begin{document}

\title{Digital Over-the-Air Computation:\\ Achieving High Reliability via Bit-Slicing}

\author{Jiawei~Liu, Yi~Gong, and Kaibin~Huang 
\thanks{Jiawei Liu is with the Department of Electrical and Electronics Engineering, The University of Hong Kong, Hong Kong, and also with the Department of Electrical and Electronics Engineering, Southern University of Science and Technology, Shenzhen 518055, China (e-mail: liujw@eee.hku.hk).

Yi Gong is with the Department of Electrical and Electronics Engineering, Southern University of Science and Technology, Shenzhen 518055, China (e-mail: gongy@sustech.edu.cn).

Kaibin Huang is with the Department of Electrical and Electronics Engineering, The University of Hong Kong, Hong Kong (e-mail: huangkb@eee.hku.hk).

Corresponding authors: Kaibin Huang; Yi Gong.}}
 
\IEEEpeerreviewmaketitle
\maketitle

\begin{abstract}
    6G mobile networks aim to realize ubiquitous intelligence at the network edge via distributed learning, sensing, and data analytics.
    Their common operation is to aggregate high-dimensional data, which causes a communication bottleneck that cannot be resolved using traditional orthogonal multi-access schemes.
    A promising solution, called over-the-air computation (AirComp), exploits channels' waveform superposition property to enable simultaneous access, thereby overcoming the bottleneck.
    Nevertheless, its reliance on uncoded linear analog modulation exposes data to perturbation by noise and interference.
    Hence, the traditional analog AirComp falls short of meeting the high-reliability requirement for 6G.
    Overcoming the limitation of analog AirComp motivates this work, which focuses on developing a framework for digital AirComp. 
    The proposed framework features digital modulation of each data value, integrated with the bit-slicing technique to allocate its bits to multiple symbols, thereby increasing the AirComp reliability.
    To optimally detect the aggregated digital symbols, we derive the optimal maximum a posteriori detector that is shown to outperform the traditional maximum likelihood detector.
    Furthermore, a comparative performance analysis of digital AirComp with respect to its analog counterpart with repetition coding is conducted to quantify the practical signal-to-noise ratio (SNR) regime favoring the proposed scheme.
    On the other hand, digital AirComp is enhanced by further development to feature awareness of heterogeneous bit importance levels and its exploitation in channel adaptation.
    Lastly, simulation results demonstrate the achivability of substantial reliability improvement of digital AirComp over its analog counterpart given the same channel uses.
\end{abstract}

\begin{IEEEkeywords}
    Over-the-air computation (AirComp), digital AirComp, digital modulation, maximum a posteriori detection
\end{IEEEkeywords}

\section{Introduction}
A key mission of 6G mobile networks is to realize ubiquitous intelligence at the network edge via distributed learning, sensing, and data analytics\cite{tong_6g_2021,peltonen_6g_2020}.
This will provide a platform for deploying a broad range of next-generation Internet-of-Things (IoT) applications such as healthcare, virtual reality, and industrial automation\cite{tong_6g_2021,zhou_edge_2019}.
A common network operation shared by the said use cases is for a server (or fusion center) to aggregate high-dimensional data uploaded by many mobile devices. 
Such data are in the form of high dimensional model updates, features of sensing data, and local computation results in the context of federated learning, distributed sensing, and distributed data analytics, respectively\cite{lim_federated_2020,xie_networked_2023,chen_--fly_2023}.
The resultant communication bottleneck cannot be resolved using traditional multi-access schemes as their attempts to orthogonalize multi-user data streams hinder their scalability.
Recently, addressing this issue has motivated researchers to explore a class of scalable, simultaneous access schemes, called over-the-air computation (AirComp)\cite{zhu_over_2021}. 
They exploit the property of waveform superposition to realize over-the-air aggregation of simultaneous data streams. 
However, due to the reliance on uncoded linear analog modulation, AirComp is exposed to perturbation by channel noise and interference and falls short of meeting the high-reliability requirement for 6G.
Overcoming the limitation motivates this work on developing the framework of digital AirComp to attain the desired high reliability.

AirComp was first explored in\cite{nazer_harnessing_2011} and\cite{abari_over--air_2016} for distributed sensing and further developed in\cite{zhu_mimo_2019} to enable spatial multiplexing targeting multimodel sensing.
Presently, the area of AirComp is experiencing fast growth as relevant techniques provide promising solutions for 6G edge intelligence\cite{zhu_over_2021,chen_guest_2022}.
Such techniques are especially popular in communication efficient federated learning, forming an area in its own right called over-the-air federated learning (AirFL)\cite{mohammadi_machine_2020,zhu_mimo_2019,yang_federated_2020,cao_optimized_2022,zhang_coded_2023,zhang_turning_2022,sery_over-federated_2021,wang_2023_spectrum}.
In this context, the purpose of AirComp is to aggregate local models or stochastic gradients uploaded by devices. The aggregation results are then applied to update a global model under training at a server\cite{mohammadi_machine_2020,zhu_mimo_2019,yang_federated_2020}.
Researchers have further developed advanced AirFL techniques including power control\cite{cao_optimized_2022,zhang_coded_2023,zhang_turning_2022}, adaptive precoding\cite{sery_over-federated_2021}, scheduling\cite{yang_federated_2020}, and interference suppression\cite{wang_2023_spectrum}.
Besides AirFL, progress has been made also in other directions where AirComp helps to overcome communication bottlenecks confronting distributed inference\cite{yilmaz_ensemble_2022}, integrated sensing and intelligence\cite{liu_over--air_2023}, distributed consensus\cite{lin_distributed_2023}, and distributed data analytics\cite{chen_--fly_2023}.
The aforementioned prior work all concerns analog AirComp. The reason is that linear analog modulation makes waveform superposition equivalent to the summation of pre-modulation data values. The desired equivalence may not hold if digital modulation-and-coding, often a nonlinear process, is applied. This gives rise to two drawbacks of AirComp.
One is the lack of protection by coding exposes computation to the channel hostility, making it difficult for the technology to attain high reliability.
The other is the incompatibility with existing mobile systems that are prevalently digital.

Addressing these issues has led to several initial studies on digital AirComp designs.
The earliest attempts involve the implementation of signSGD, one popular federated learning algorithm, such that the coefficients of stochastic gradient uploaded by each device are quantized into single bits while the aggregation operation helps to suppress the quantization errors via averaging\cite{zhu_one-bit_2021,jiang_cluster-based_2020}.
More recently, AirComp has been shown to be compatible with existing digital channel coding and generalized digital modulation\cite{you_broadband_2022,razavikia_channelcomp_2023,razavikia_sumcomp_2023,zhang_coded_2023}. Their basic principle is to treat AirComp as a special type of channel, encoding operations, or mapping between finite fields such that the designed computation/aggregation results can be recovered using a maximum-likelihood (ML) decoder or detector.
It is worth mentioning that nested lattice coding as studied in\cite{zhang_coded_2023} for AirFL is a particularly promising solution for digital AirComp as its inherent linearity ensures decodability of coded data after aggregation.
While prior work demonstrates the feasibility of digital AirComp, the field is still at its nascent stage.
In particular, a customized communication theory is yet to be developed and the optimal coding and modulation schemes for AirComp remain largely unknown.
In this work, we attempt to address the following three open issues.
\begin{enumerate}
    \item \textbf{Reliable multi-symbol AirComp}: The state-of-the-art digital AirComp only supports the mapping of each analog value to a \emph{single} digital symbol (e.g. QAM symbol), lacking the flexibility of traditional digital communication\cite{zhu_one-bit_2021,jiang_cluster-based_2020,you_broadband_2022,razavikia_channelcomp_2023,razavikia_sumcomp_2023,zhang_coded_2023}. For such designs termed single-symbol AirComp, their quantization errors are significant as each digital symbol comprises a relatively small number of bits (e.g., 4-8 bits). The techniques adopted therein for coding, modulation and detection are borrowed from traditional communication literature without considering computing integration. These lead to the underperformance of uncoded digital AirComp as opposed to its analog counterpart\cite{razavikia_channelcomp_2023,razavikia_sumcomp_2023,zhang_coded_2023}. The disadvantage calls for the finding of a flexible way of adding redundancy to digital AirComp, i.e., the development of \emph{multi-symbol AirComp}. Thereby, the desired ultra-high reliability for 6G can be achieved.
    \item \textbf{Optimal detection}: The optimality of the traditional ML detector, which is adopted in the current digital AirComp literature (see, e.g. \cite{razavikia_channelcomp_2023, razavikia_sumcomp_2023}), hinges on the assumption of equiprobable symbols. However, the aggregation operation in AirComp results in \emph{non-uniform} distribution of the symbols in a constellation. While the ML detector is no longer optimal, the optimal design remains unknown.
    \item \textbf{Bit-importance aware modulation and coding}: Different quantization bits (e.g., most significant versus least significant bits) have different importance levels in terms of their influences on AirComp error and thus warrant unequal protection during transmission. This computing-relevant issue is largely considered out-of-scope in the literature of traditional wireless communication techniques as they have been designed based on the communication-computing separation approach\cite{chen_guest_2022}. On the contrary, AirComp is inherently a communication-computing integration technology, making bit-importance-aware modulation and coding an important aspect of digital AirComp.
\end{enumerate}

As a result of investigating these issues, the main contributions and findings of this work are summarized as follows.

\begin{enumerate}
    \item \textbf{Multi-symbol digital AirComp framework}: To address open issue 1), the proposed framework specifies a complete sequence of transmitter/receiver operations for realizing multi-symbol digital AirComp. Central to the framework is a \emph{bit-slicing} technique that divides the bit sequence representing an analog data value into segments, each of which is transmitted as a single modulated (digital) symbol. Then the bit segments received by the receiver are assembled and used to reconstruct the transmitted analog value. We present computation-error analyses of both multi-symbol digital AirComp and its analog counterpart, namely analog AirComp with repetition coding. The results reveal that the former can attain significantly smaller errors than the latter in the high-reliability regime corresponding to moderate-to-high SNRs. This confirms the proposed technology's advantage in supporting high-reliability communication and computing.
    
    \item \textbf{Optimal detection}: To address open issue 2), we investigate the optimal maximum a posterior (MAP) detector for the preceding framework. Essentially, the design involves analyzing the distribution of aggregated digital symbols as a combinatorial problem. This allows the decision boundaries of the MAP detector to be derived. In contrast with traditional equidistant boundaries, the optimal ones for AirComp provide larger detection regions for more probable constellation points so as to minimize AirComp errors.
    
    \item \textbf{Importance aware bit-slicing}: Open issue 3) is addressed by exploring heterogeneity in bit importance to optimize the bit allocation (i.e., constellation sizes) of different digital symbols as generated by bit-slicing under a constraint on the total number of bits for representing an analog value. Such a combinatorial problem is solved numerically. The resultant scheme of importance-aware bit-slicing is demonstrated in experiments to significantly outperform uniform bit-slicing.               
\end{enumerate}

The remainder of the paper is organized as follows: in Section \ref{sec:system_model}, we describe the system model. Section~\ref{sec:digital_aircomp_framework} presents the framework of multi-symbol digital AirComp. 
The detector in the framework is optimized in Section~\ref{sec:optimal_detection}.
A Comparative performance analysis of digital and analog AirComp is conducted in Section~\ref{sec:error_analysis}.
The bit-slicing scheme is extended in Section~\ref{sec:extension_adaptive} to feature bit-importance awareness and channel adaptation. Numerical results are presented in Section~\ref{sec:numerical_results}, followed by conclusions in Section~\ref{sec:conclusion}

\section{System Model}
\label{sec:system_model}
Consider a distributed edge network with $K$ edge devices and an edge server.  
All devices simultaneously transmit their data streams to perform AirComp. The system model for the proposed digital AirComp is illustrated in Fig.~\ref{fig:system_model} with assumptions and metrics described in the subsections.

\begin{figure*}
    \centering
    \includegraphics[width=0.9\linewidth]{./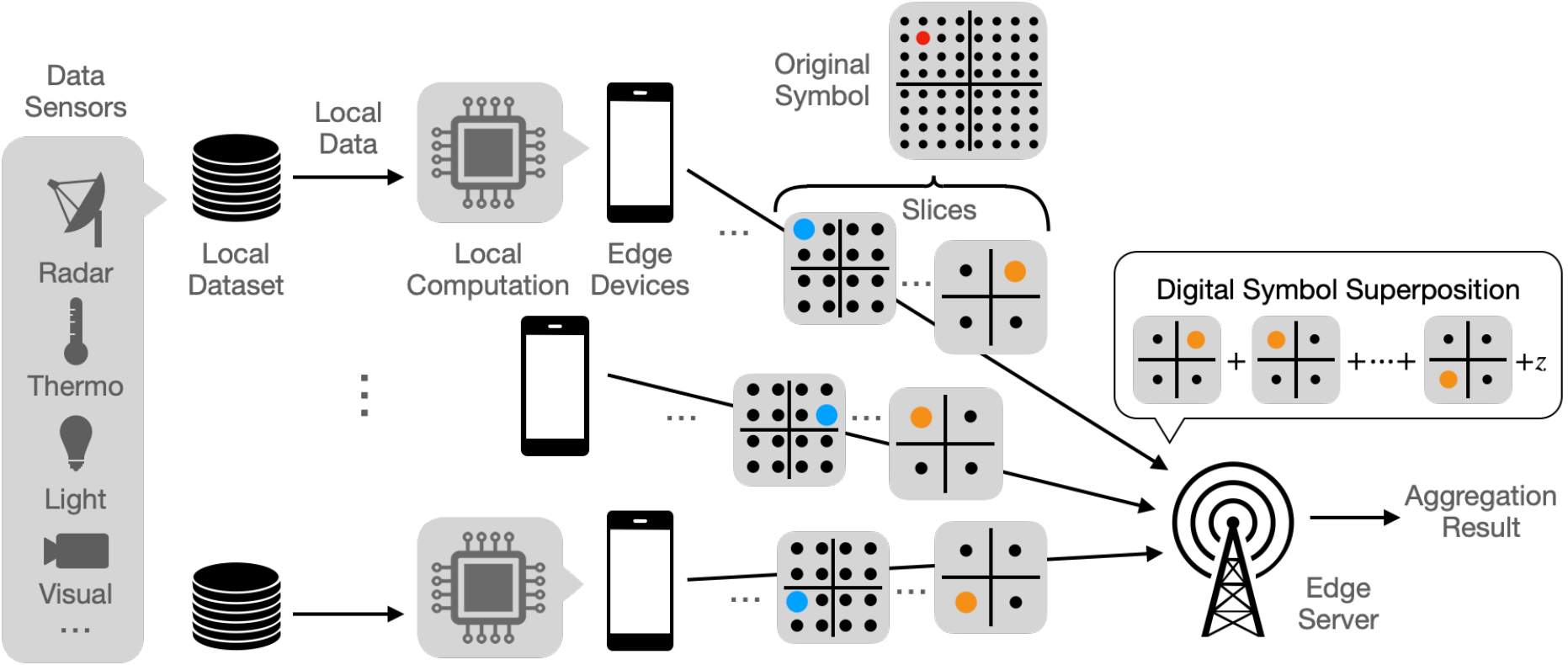}
    \caption{Overview of the digital AirComp system.}
    \label{fig:system_model}
\end{figure*}

\subsection{Computation Model}
\subsubsection{High-Precision Analog Source}
Given the prevalence of digital processors, sensing data are mostly stored and processed in digital format, and shared over a digital communication system.
In this context, an analog source is not truly analog (e.g., on an analog medium such as magnetic tapes), but instead, a high-precision digital source (with e.g., 32-bit or 64-bit precision) such that its difference from the ground-truth real-world value is negligible.
Then we can model the data vector each device intends to upload, say $\mathbf{x}_k$ for device $k$, as a \emph{continuous} random vector.
Let $\mathbf{x}_k$ consist of $M$ real elements: $\mathbf{x}_k=[x_k[1],x_k[2],...,x_k[M]]$.
It is assumed that each data vector be whitened and equalized locally such that its elements are identical and independently distributed (i.i.d) as uniform random variables: $x_k[m]\sim\mathcal{U}(x_{\min},x_{\max})$ for all $(k,m)$, where the constant $x_{\min}$ and $x_{\max}$ define their dynamic range.
The operations of whitening and equalization help to remove data redundancy and improve the efficiency of digital operations with finite dynamic ranges (e.g., analog-to-digital/digital-to-analog conversion\cite{dunn_1995_efficient}).

\subsubsection{Computation Task}
We consider the basic aggregation task underpinning typical computation algorithms for distributed sensing, learning, and inference.
Specifically, the task requires the server to estimate the summation of local data vectors, $\mathbf{y}=\sum_{k=1}^{K}\mathbf{x}_k$, from received signals due to transmission of $\{\mathbf{x}_k\}$ by devices.
A range of more sophisticated tasks, known as nomographic functional computation, can be implemented by adding to aggregation suitable pre-/post-processing\cite{zhu_mimo_2019}.
Examples include averaging, geometric mean, multiplication, weighted sum, and even maximization\cite{zhu_mimo_2019,liu_over--air_2023}.

\subsection{Communication Model}
\label{subsec:communicationmodel}
\subsubsection{Channel Model}
Consider the multi-access channel in Fig. \ref{fig:system_model}. Time is divided into slots, each spanning a single symbol duration. Consider an arbitrary device, say the $k$-th device. The channels between the devices and the server are assumed to be block-fading, where the channel gain $h_k$ remains unchanged in a channel coherence duration consisting of $R$ symbol slots.
Given channel diversity (achieved by, e.g., coherent combining), the channel power gain is assumed to follow the Chi-square distribution, i.e., $|h_k|^2\sim\chi^2(\kappa)$, where $\kappa\geq1$ denotes the degrees of freedom.

\subsubsection{AirComp}
All edge devices simultaneously upload their data vectors, $\{\mathbf{x}_k\}_{k=1}^{K}$, by transmit the corresponding $R\times1$ modulated symbol vectors, denoted as $\{\mathbf{m}_k\}_{k=1}^{K}$, over the multi-access channel.
Assuming synchronized carrier frequency and symbol timing, the received symbol vector at the server is given as:
\begin{equation}
    \label{eq:channel}
    \mathbf{r} = \sum_{k=1}^{K}h_k\sqrt{p_k}\mathbf{m}_k + \mathbf{z},
\end{equation}
where $p_k$ denotes the transmitting power of the $k$-th device, and $\mathbf{z}$ the additive white Gaussian noise (AWGN), with i.i.d $\mathcal{CN}(0,\sigma_z^2)$.

\subsubsection{Channel-Inversion Power Control}
Assuming that channel state information (CSI) is perfectly known at the server and devices through channel estimation and feedback. 
Considering \eqref{eq:channel}, to accomplish the aforementioned computation task, the power of each device, $p_k$, is controlled to invert the channel, $h_k$: $p_k=\rho^2/|h_k|^2$, where $\rho$ denotes a given scaling factor selected to ensure that the average transmit power constraint is satisfied for all devices\cite{zhu_broadband_2020}.
Note that with the Chi-square channel power distribution, the channel inversion requires finite average transmission power: $\mathbb{E}\left[\frac{\rho^2}{|h_k|^2}\right]=\rho^2/\kappa$. 
It follows that the receive signal-to-noise ratio (SNR) for an individual signal is $\gamma=\rho^2/\sigma_z^2$.

\subsection{Performance Metric}
\label{subsec:performance_metric}
Following from the literature (see, e.g.,\cite{zhu_mimo_2019}), the performance of AirComp is measured by the mean square error (MSE) between the ideal and estimated computation results, termed AirComp error and defined as:
\begin{equation}
    \label{eq:MSE}
    \mathcal{E}\triangleq\mathbb{E}\left[\left(\hat{y}-y\right)^2\right],
\end{equation}
By normalizing the AirComp error by the expected energy of the aggregation result, we obtain the normalized AirComp error metric, denoted as $\overline{\mathcal{E}}$:
\begin{equation}
    \label{eq:nMSE} \overline{\mathcal{E}}\triangleq\frac{\mathcal{E}}{\mathbb{E}\left[y^2\right]}.
\end{equation}

\section{Overview of Digital AirComp}
\label{sec:digital_aircomp_framework}
In this section, the basic operations in the digital AirComp framework are described as follows. Their novelty lies in the seamless integration of AirComp, digital modulation, and bit-slicing.

\begin{figure*}
    \centering
    \begin{subfigure}[b]{0.9\linewidth}
        \centering
        \includegraphics[width=\linewidth]{./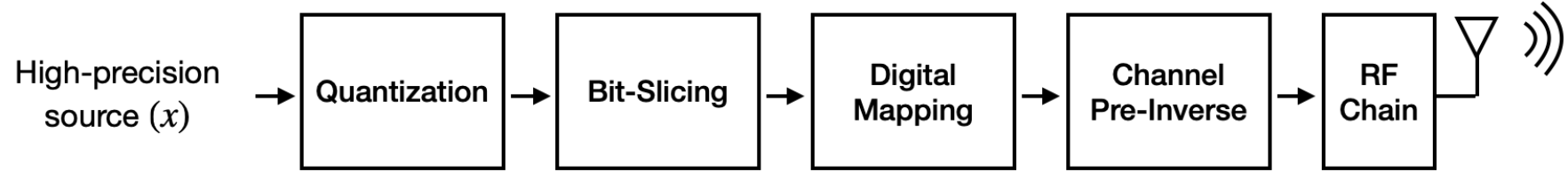}
        \caption{Transmitter operations}
        \label{fig:transmitter_design}
    \end{subfigure}
    \begin{subfigure}[b]{0.9\linewidth}
        \centering
        \includegraphics[width=\linewidth]{./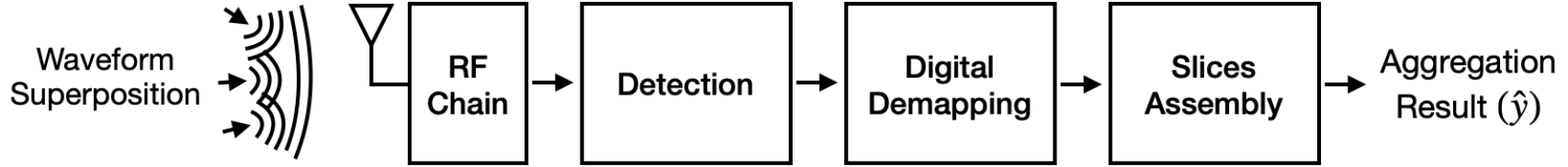}
        \caption{Receiver operations}
        \label{fig:receiver_design}
    \end{subfigure}
    \caption{Communication system employing digital AirComp with bit-slicing}
    \label{fig:transmitter_receiver_design}
\end{figure*}

\subsection{Bit Operations and Transmission}
The transmitting pre-processing operations as indicated in Fig. \ref{fig:transmitter_design} are elaborated as follows.
\subsubsection{Data Quantization}
\label{sec:quantization}
To rein in communication overhead, the values output by the analog data source should be quantized using a $B$-bit quantizer whose resolution, $B$ (e.g. 8-10 bit), is much lower than the original precision.
Consider uniform quantization of an arbitrary analog symbol, $x$, into its digital counterpart, denoted as $\hat{x}$. Let $q$ denote the index of $\hat{x}$ in the quantization codebook. Then, $q = \left\lfloor\frac{x}{\Delta}\right\rfloor$, with the step size $\Delta = (x_{\max}-x_{\min})/2^B$. Correspondingly, $\hat{x}=q\Delta+\left(\frac{1}{2}\Delta+x_{\min}\right)$.
The quantization operation introduces irreversible distortion to $x$ when converting it to $\hat{x}$.
Denoting this distortion as $\epsilon$, then, $\epsilon = x - \hat{x}$.
$\epsilon$ is accurately modeled as additive i.i.d uniform random variable independent of $x$. The distribution is given as $\epsilon\sim\mathcal{U}(-\frac{\Delta^2}{12},\frac{\Delta^2}{12})$ \cite{gray_quantization_1998}.

\subsubsection{Bit-Slicing}
\label{sec:bit-slicing}
Each quantized digital data value is sliced into multiple digital values so as to reduce their constellation sizes, increase energy per bit, thereby reducing the aggregation error rate.
Consider an arbitrary digital data value $\hat{x}$ and its associated codebook index, $q$. Let the bit sequence representing $q$ be divided into $L$ bit segments, with each representing an integer, termed \emph{sliced integer}.
Denote the bit-slicing scheme as $\mathbf{b}=[b_1,b_2,...,b_L]$, with each element denoting the number of bits of the corresponding sliced integer.
Then the $\ell$-th sliced integer, denoted as $q[\ell]$, can be related to $q$ by the following equation:
\begin{equation}
    \label{eq:bit-slicing}
    q[\ell] = \left\lfloor\frac{q}{2^{c_{\ell-1}}}\right\rfloor-2^{b_\ell}\left\lfloor\frac{q}{2^{c_\ell}}\right\rfloor,
\end{equation}
where $c_\ell=\sum_{j=1}^{\ell}b_j$ denotes the cumulated bit width of the first $\ell$ slices, with $c_0=0$.
For example, given the precision $B=3$ and the bit-slicing scheme $\mathbf{b}=[1,2]$, $q$ is sliced into two integers, $q[1]$ and $q[2]$, with $q[1]\in\{0,1\}$ and $q[2]\in\{0,1,2,3\}$.

Given the sliced integers, $\{q[\ell]\}$, the original quantized data value, $\hat{x}$, can be reconstructed as:
\begin{equation}
    \hat{x} = \sum_{\ell=1}^{L}2^{c_{\ell-1}}q[\ell]\Delta + \frac{1}{2}\Delta + x_{min}.
\end{equation}

\subsubsection{Digital Mapping}
For optimal power efficiency and compatibility with prevalent communication protocols, sliced integers are mapped to square QAM symbols for transmission.
Specifically, the $\ell$-th sliced integers of two consecutive data values, namely $q[2n-1,\ell]$ and $q[2n,\ell]$, are mapped to the $I$ and $Q$ branches of the $(n,\ell)$-th modulated symbol, denoted as $m[n,\ell]$. Mathematically, $m[n,\ell]$ can be expressed as:
\begin{equation}
    \label{eq:digitalMap}
    \begin{aligned}
    m[n,\ell]&=\left(q[2n-1,\ell]-\frac{2^{b_\ell}-1}{2}\right)d_\ell\\
    &\indent+i\left(q[2n,\ell]-\frac{2^{b_\ell}-1}{2}\right)d_\ell,
    \end{aligned}
\end{equation}
based on a $4^{b_\ell}$-QAM constellation, where, $d_\ell$ represents the distance between two adjacent constellation points. Given normalized symbol power,
\begin{equation}
    \label{eq:constellationspacing}
    d_\ell=\sqrt{\frac{6}{4^{b_\ell}-1}}.
\end{equation}
Consequently, $N\times L$ modulated symbols are generated for $M=2N$ data symbols.

Continuing the previous example, a pair of $3$-bit quantized data symbols, $\hat{x}[2n-1]$ and $\hat{x}[2n]$ (corresponding quantization indices: $q[2n-1]$ and $q[2n]$), originally require a $64$-ary QAM constellation for transmission. Bit-slicing generates a $1$-bit sliced integer pair, $(q[2n-1,1],q[2n,1])$, and a $2$-bit pair, $(q[2n-1,2],q[2n,2])$, which can be transmitted using $4$-QAM and $16$-QAM, respectively. Consequently, the originally dense constellation is replaced by two sparse constellations, used over two consecutive symbol slots, for improved reliability. The example is illustrated in Fig. \ref{fig:bitslicingillustration}.

\begin{figure}
    \centering
    \includegraphics[width=\linewidth]{./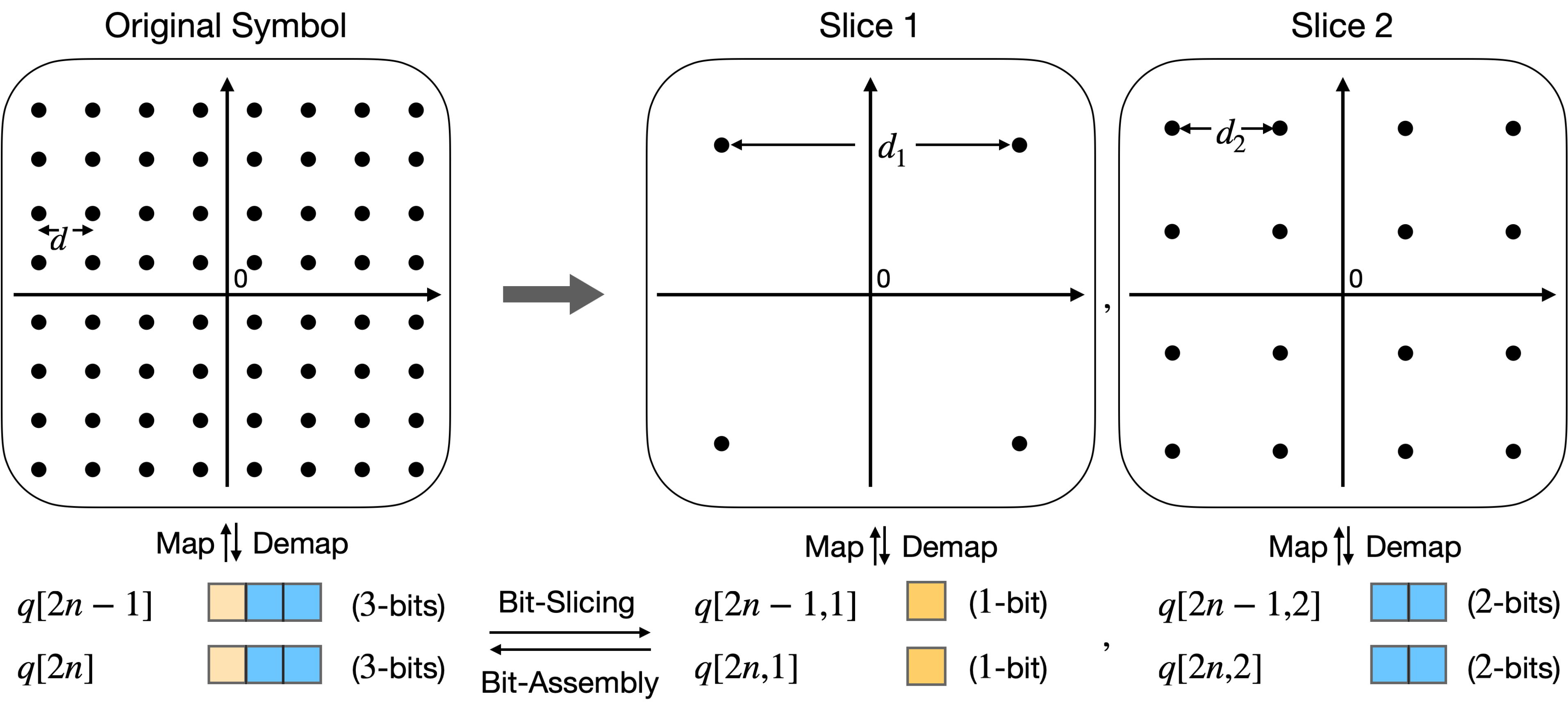}
    \caption{Illustration of bit-slicing framework and digital mapping. High bit-width integers are sliced into low bit-width integers, for transmission with sparse constellations.}
    \label{fig:bitslicingillustration}
\end{figure}

\subsubsection{Multi-Symbol Transmission}
All $K$ edge devices simultaneously transmit $N\times L$ digitally modulated symbols.
Let $s[n,\ell]=\sum_{k=1}^{K}m_k[n,\ell]$ represent the ideal superposition outcome for the $(n,\ell)$-th transmitted symbols.
Based on \eqref{eq:channel}, the $(n,\ell)$-th received symbol can be expressed as $r[n,\ell]=\rho s[n,\ell]+z[n,\ell]$.

\subsection{Receiver Post-processing}
The receiver post-processing operations are illustrated in Fig. \ref{fig:receiver_design} and described as follows.

\subsubsection{Symbol Detection and Demapping}
Consider an arbitrary received symbol, $r$. The receiver aims to estimate the corresponding ideal superimposed symbol, $s$.
Due to superposition of $K$ constellations, $s$ belongs to a $((2^{b}-1)K+1)^2$-ary QAM constellation, denoted as $\mathcal{S}$. The estimation is denoted as $\hat{s}$. The optimal detector is presented in Section~\ref{sec:optimal_detection}.

Based on \eqref{eq:digitalMap}, the $I$ and $Q$ branches of $\hat{s}$ convey two summations of sliced integers. The demapping for the real and imaginary parts of $\hat{s}$ can be performed separately.
Let the $K$-user aggregation of the $(m,\ell)$-th sliced integers be denoted as $u[m,\ell]=\sum_{k=1}^{K}q_k[m,\ell]$. Then its estimate by demapping can be expressed as:
\begin{equation}
    \label{eq:digitalDemap}
    \hat{u}[m,\ell]=
    \begin{cases}
    \Re\left(\hat{s}\left[\frac{m+1}{2},\ell\right]\right)/d_\ell+\frac{2^{b_\ell}-1}{2}K,\ \text{for odd } m,\\
    \Im\left(\hat{s}\left[\frac{m}{2},\ell\right]\right)/d_\ell+\frac{2^{b_\ell}-1}{2}K,\ \text{for even } m.\\
    \end{cases}
\end{equation}

\subsubsection{Slice-Assembly} 
To estimate the aggregation of data symbols, the receiver assembles the estimated aggregated slices, $\{\hat{u}[m,\ell]\}_{\ell=1}^L$. 
Let $u[m]=\sum_{k=1}^{K}q_{k}[m]$ denote the ideal aggregated codebook indices. Then, $u[m]$ is generated base on \eqref{eq:bit-slicing}:
\begin{equation}
    \label{eq:bitassembly}
    \hat{u}[m]=\sum_{\ell=1}^{L}\hat{u}[m,\ell]2^{c_{\ell-1}}.
\end{equation}
Finally, the desired AirComp result, $\hat{y}[m]=\sum_{k=1}^{K}\hat{x}_k[m]$, is obtained by denormalizing $\hat{u}[m]$: 
\begin{equation}
    \label{eq:computationresult}
    \hat{y}[m]=\hat{u}[m]\Delta+\left(\frac{\Delta}{2}+x_{\min}\right)K.
\end{equation}

\section{Optimal Detection for Digital AirComp}
\label{sec:optimal_detection}
For ease of notation, we omit the indices $n$, $m$ and $\ell$ in this section whenever no confusion is caused.
Optimal detection of the superimposed symbol $s$ is provided by the MAP detector, which estimates based on the prior probability $\Pr\{s=s_{\ell}\}$ and the channel noise variance. 
In this section, we first recapture the preliminaries of MAP detection. Then we derive the prior probability of the superimposed constellation points. Finally, the MAP detection algorithm for the digital AirComp receiver is presented.

\subsection{Definition of MAP Detector for Digital AirComp}
In traditional point-to-point digital communication systems, the transmitted symbols are assumed to be equiprobable. 
Under this assumption, the MAP detector can be simplified to the maximum likelihood (ML) detector, which generates decision boundaries at midpoints of adjacent constellation points.
The same approach does not retain its optimality in the more sophisticated case of AirComp, which involves aggregation over a multi-access channel.
While the transmitted symbols, $\{m_k\}$, are equiprobable, the superimposed symbols, $\{s\}$, are no longer uniformly distributed.
As the ML detector loses its optimality, we design the optimal AirComp detector based on the MAP rule defined as:
\begin{equation}
    \label{eq:digitaldetection}
    \hat{s}(r) = \underset{s_{j}\in\mathcal{S}}{\mathrm{argmax}}\{p_{j} f(r|s_{j})\},
\end{equation}
where the prior probability $p_{j}=\Pr\{s=s_{j}\}$ with $s_{j}\in\mathcal{S}$, and $f(r|s)$ is the conditional channel transition probability defined as:
\begin{equation}
    \label{eq:channeltransfer}
    f(r|s) = \frac{1}{\pi\sigma_z^2}\exp\left(-\frac{\left|r-\rho s\right|^2}{\sigma_z^2}\right).
\end{equation}

In the current system, the aggregated symbols, $\{s\}$, have independent $I$ and $Q$ branches. Then, the MAP detection of $s$ can be decoupled into two independent tasks, each concerning the detection of a pulse amplitude modulated (PAM) symbol from its real/imaginary part, denoted as $s^{(I)}$ or $s^{(Q)}$ (see Appendix~\ref{app:QAMdetection} for more details).
This allows us to focus on the study of MAP detection of PAM symbols in the rest of this section.

\subsection{Distribution of Aggregated Symbols}
\label{sec:symboldist}
Consider an arbitrary aggregated PAM symbol, $s^{(I)}$ or $s^{(Q)}$, with the superscript omitted for brevity.
By slight abuse of notation, let this symbol be denoted as $s$, and $\hat{s}$, $\mathcal{S}$, $p_{j}$ and $r$ denote the corresponding estimated symbol, constellation, prior probability and received symbol, respectively.
To derive the MAP detector for $s$, its distribution is characterized in this subsection.

The transmitted PAM symbols, $\{m_k\}$, are generated by linear mapping of $\{q_k\}$, and the bit-slicing and quantization operations preserve the data uniformity \cite{tezuka2012uniform}.
Therefore, $\{m_k\}$ are i.i.d uniform on the PAM constellation, which contains $P=2^{b}$ evenly spaced real values with a minimum distance $d=\sqrt{\frac{6}{2^b-1}}$.
Consequently, the constellation of $s$, $\mathcal{S}$, is a set of $((P-1)K+1)$ evenly spaced real values with the same spacing.
By analyzing the sum distribution using probability-generating functions, the exact distribution of $s$ is characterized in the following lemma.

\begin{Lemma}
    \label{lemma:exactdist}
    The superimposed PAM symbols follow a lattice distribution with probability given by the polynomial coefficient:
    \begin{equation}
        \label{eq:exactdist}
        \Pr\{s=s_{j}\}=
        \begin{cases}
            \left(\frac{1}{P}\right)^{K}\binom{j-1}{K}_{P}, &\text{for $s_j\in \mathcal{S}$}\\
            0, &\text{otherwise}
        \end{cases},
    \end{equation}
    where $\binom{n}{k}_m$ denotes the following polynomial coefficient \cite{steffen_eger_stirlings_2014}:
    \begin{equation}
        \label{eq:polycoef}
        \binom{n}{k}_{m}=\sum_{t=0}(-1)^t\binom{k}{t}\binom{n+k-tm-1}{k-1},
    \end{equation}
    with $\binom{n}{k}$ being the binomial coefficient.
\end{Lemma}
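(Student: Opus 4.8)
The plan is to reduce the statement to a classical counting problem for the sum of independent discrete uniform integers, and then extract the required coefficient via generating functions. First I would strip away the affine mapping: by the digital-mapping rule each transmitted PAM symbol satisfies $m_k = \left(q_k - \frac{P-1}{2}\right)d$ with $q_k$ uniform on $\{0,1,\dots,P-1\}$, so the superposition obeys $s = d\left(u - K\frac{P-1}{2}\right)$, where $u = \sum_{k=1}^{K} q_k$. Since this is a bijective affine relabelling of the evenly spaced points of $\mathcal{S}$, identifying the $j$-th constellation point $s_j$ with the integer value $u = j-1$ reduces the claim to computing $\Pr\{u = j-1\}$, the law of a sum of $K$ i.i.d.\ discrete uniforms.

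Next I would invoke the uniformity of the $\{m_k\}$ (guaranteed by the preservation of data uniformity under bit-slicing and quantization) to write the probability-generating function of a single $q_k$ as $G(z) = \frac{1}{P}\sum_{i=0}^{P-1} z^i = \frac{1}{P}\cdot\frac{1-z^P}{1-z}$. By independence the PGF of $u$ factorizes as $G(z)^K$, so that $\Pr\{u = v\}$ is the coefficient of $z^v$ in $G(z)^K$. This immediately produces the prefactor $\left(\frac{1}{P}\right)^K$ and leaves the task of extracting the coefficient of $z^{j-1}$ from $\left(\frac{1-z^P}{1-z}\right)^K$.

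The remaining step is purely algebraic: expand the numerator by the binomial theorem as $(1-z^P)^K = \sum_{t=0}^{K}(-1)^t\binom{K}{t}z^{Pt}$ and the denominator by the negative-binomial series as $(1-z)^{-K} = \sum_{s\geq 0}\binom{K+s-1}{K-1}z^s$, then take the Cauchy product and collect the coefficient of $z^{j-1}$. Enforcing the exponent balance $Pt + s = j-1$ yields $\sum_{t}(-1)^t\binom{K}{t}\binom{j-1-tP+K-1}{K-1}$, which is exactly the polynomial coefficient $\binom{j-1}{K}_{P}$ of \eqref{eq:polycoef} with $n = j-1$, $k = K$, $m = P$. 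The vanishing of the probability for $s_j \notin \mathcal{S}$ follows because $G(z)^K$ is a polynomial of degree $(P-1)K$, so all coefficients outside $0 \leq j-1 \leq (P-1)K$ are zero.

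I expect no single step to be a genuine obstacle; the only care needed is bookkeeping in the coefficient extraction — in particular verifying that the upper summation limit on $t$ is automatically enforced, since terms with $tP > j-1$ carry a binomial coefficient $\binom{j-1-tP+K-1}{K-1}$ interpreted as zero, so the formally unbounded sum in \eqref{eq:polycoef} agrees with the finite convolution. A secondary point worth confirming is the consistency of the index convention, i.e.\ that the lowest constellation point ($u=0$) is labelled $j=1$, which is precisely what makes the argument of the polynomial coefficient equal $j-1$ rather than $j$.
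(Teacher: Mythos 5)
Your proposal is correct and takes essentially the same route as the paper's own proof: both express the law of the superimposed symbol through the $K$-th power of the single-symbol probability generating function and identify $\Pr\{s=s_j\}$ as a polynomial coefficient of that power. The only difference is one of completeness --- the paper cites the reference for the coefficient identity, whereas you derive it explicitly via the binomial and negative-binomial expansions, and you also make explicit the affine relabelling from PAM values to the integer indices $q_k$ (which the paper's use of $t^{m_\ell}$ with non-integer exponents leaves implicit).
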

\begin{proof}
    (See Appendix \ref{app:exactdist}).
\end{proof}
\begin{Remark}
    An intuitive way to understand Lemma~\ref{lemma:exactdist} is by noticing its similarity to a classic combinatorial problem---balls-in-bins. We can write the prior probability, $p_{j}$, as given in \eqref{eq:exactdist}, as $p_{j}=\frac{\text{\# of combinations of $\{m_{k}\}$ generating $s_{j}$}}{\text{\# of all possible combinations of $\{m_{k}\}$}}$. The numerator is the same as the number of possible ways to distribute $(j-1)$ balls into $K$ bins, where each bin can hold at most $(P-1)$ balls. Such number can be calculated as $\binom{j-1}{K}_{P}$. The denominator counts all possible ways to distribute the balls, which is $P^K$.
\end{Remark}

However, the distribution in \eqref{eq:exactdist} is too complex to generate insightful results on the MAP detection boundaries. 
For tractability, we propose to pursue the Normal approximation as follows.
Since $\{m_{k}\}$ are random variables having lattice distribution with finite mean $0$ and variance $\frac{1}{\sqrt{2}}$. The local limiting theorem \cite{b_v_gnedenko_limit_1954} states that the distribution of $s$ can be approximated by the Normal pdf asymptotically, given by:
\begin{equation}
    \label{eq:local_limit}
    \lim_{K\to\infty}\sup_{s_j\in\mathcal{S}}\left|\frac{\sqrt{K}}{\sqrt{2}d}p_{j}-\frac{1}{\sqrt{2\pi }}\exp\left(-\frac{(jd-\frac{P+1}{2}Kd)^2}{K}\right)\right|=0.
\end{equation}
It follows that the distribution of aggregated PAM symbols as specified in Lemma~\ref{lemma:exactdist} can be approximated by the discrete Normal distribution on lattice given as
    \begin{equation}
        \label{eq:distapprox}
        \Pr\{s=s_{j}\}\approx
        \begin{cases}
            \frac{1}{\sqrt{2\pi \sigma_{j}^2}}\exp\left(-\frac{(j-\mu_{j})^2}{2\sigma_{j}^2}\right), &\text{for $s_j\in\mathcal{S}$} \\
            0, &\text{otherwise},
        \end{cases},
    \end{equation}
where $\mu_{j}=\frac{P+1}{2}K$ and $\sigma_{j}^2=\frac{P^2-1}{12}K$.
The Normal approximation is close to the exact distribution even for small $K$ (e.g., $K=4$) as shown in Fig. \ref{fig:symboldistribution}. 
The approximation accuracy grows as $K$ increases.

\begin{figure}
    \centering
    \includegraphics[width=\linewidth]{./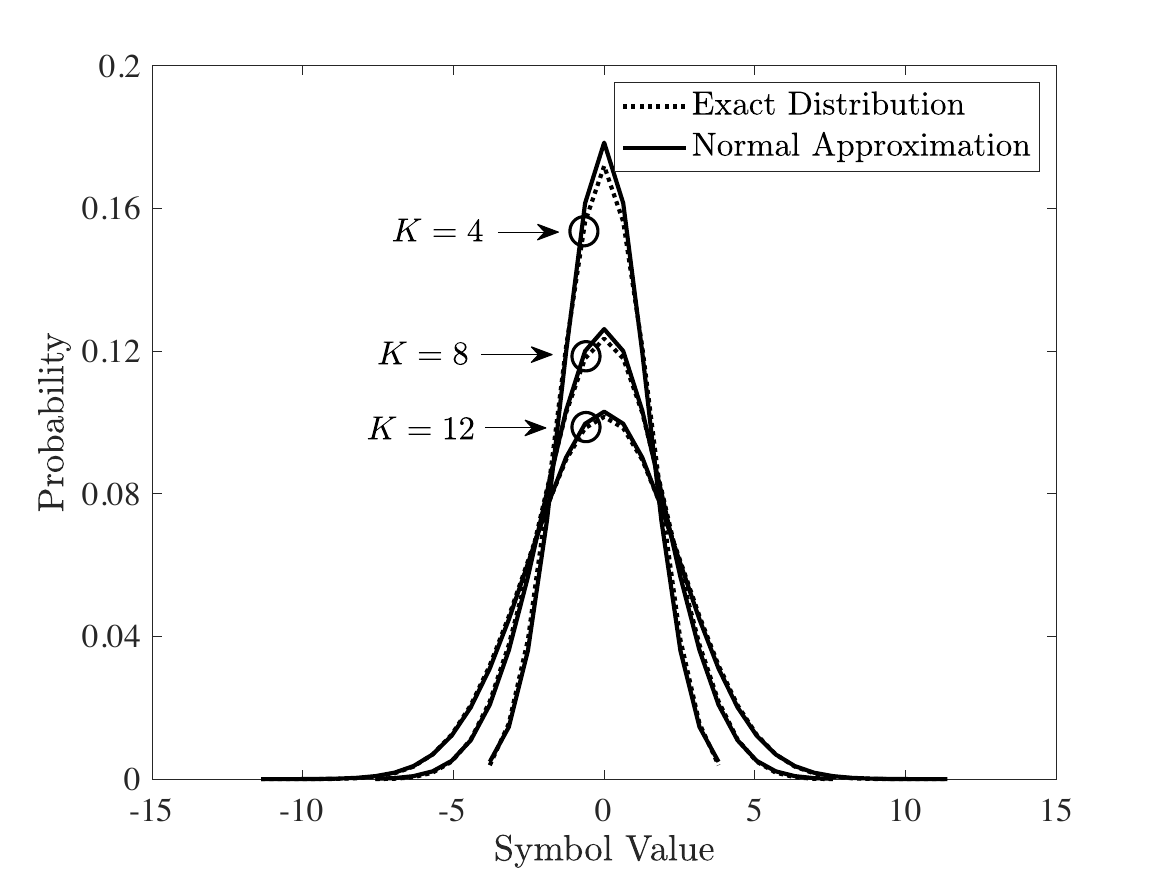}
    \caption{Comparison of the exact and approximate distribution of superimposed symbol with $P=4$ and different number of edge devices.}
    \label{fig:symboldistribution}
\end{figure}

\subsection{Optimal MAP Detector}
The optimal MAP detection in \eqref{eq:digitaldetection} can be specified by defining detection regions, $\{\mathcal{R}_{j}\}$, corresponding to the constellation points, $\{s_{j}\}$.
Each region, say $\mathcal{R}_{j}$, is defined as
\begin{equation}
    \label{eq:detectionregion}
    \mathcal{R}_{j}=\left\{r\in\mathbb{R}:p_{j} f(r|s_{j})\geq p_{m}f(r|s_{m}),\ \forall s_{m}\in\mathcal{S}\right\}.
\end{equation}

For $\{s_{j}\}_{j=2}^{(P-1)K}$, i.e. the interior symbols of the PAM constellation, a decision region, $\mathcal{R}_{j}$, can be expressed by a pair of hard decision boundaries located between $s_{j}$ and its left and right neighbors, denoted as $b_{j}^-$ and $b_{j}^+$, respectively.
The definition can be extended to the extreme points, $s_{1}$ and $s_{(P-1)K+1}$, by setting $b_{1}^-=-\infty$, and $b_{(P-1)K+1}^+=\infty$. It follows that \eqref{eq:detectionregion} can be rewritten as:
\begin{equation}
    \label{eq:PAMdetectionregion}
    \mathcal{R}_{j}=\left\{r\in\mathbb{R}:b_{j}^-< r<b_{j}^+\right\},
\end{equation}
with $b_{j}^-=b_{j-1}^+$.

\begin{Proposition}
    \label{prop:PAMdecisionboundary}
    (Optimal Decision Boundaries) The MAP decision boundaries for the aggregated PAM constellation are given as
    \begin{equation}
        \label{eq:decisionboundary}
        b_{j}^-=\rho\left(s_{j}-\frac{d}{2}\right)\left(1+\frac{1}{\gamma K}\right),\ b_{j}^-=b_{j-1}^+,
    \end{equation}
    where $\gamma=\rho^2/\sigma_z^2$ denotes the SNR for an individual signal.
\end{Proposition}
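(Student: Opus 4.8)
The plan is to exploit the Gaussian (local-limit) approximation of the prior in \eqref{eq:distapprox} together with the Gaussian likelihood in \eqref{eq:channeltransfer}, so that the log-posterior becomes a concave quadratic in the symbol value and the MAP rule in \eqref{eq:digitaldetection} reduces to nearest-point rounding of a linearly shrunk version of $r$. First I would specialize the likelihood to a single real (PAM) dimension: since the complex noise is $\mathcal{CN}(0,\sigma_z^2)$, its real/imaginary component has variance $\sigma_n^2=\sigma_z^2/2$, giving $f(r\mid s_j)\propto\exp(-(r-\rho s_j)^2/(2\sigma_n^2))$. By the symmetry of the zero-mean PAM alphabet $\{m_k\}$, the aggregated symbol $s$ is centered at the origin with variance $\Sigma^2=K\,\mathrm{Var}(m_k)=K/2$ for the unit-power constellation (equivalently $\Sigma^2=d^2\sigma^2$ with the index-domain variance $\sigma^2=(P^2-1)K/12$ of \eqref{eq:distapprox} and the spacing $d$ of \eqref{eq:constellationspacing}). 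Under this approximation I may write $\ln p_j=\mathrm{const}-s_j^2/(2\Sigma^2)$.

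Next I would combine the two quadratics. Writing the objective of \eqref{eq:digitaldetection} as $p_j f(r\mid s_j)\propto\exp(-s_j^2/(2\Sigma^2)-(r-\rho s_j)^2/(2\sigma_n^2))$ and completing the square in the symbol value shows that, over a continuum, the exponent is maximized at the linear (MMSE-type) estimate $s^\star(r)=\rho\Sigma^2 r/(\sigma_n^2+\rho^2\Sigma^2)$. Because this exponent is a concave quadratic symmetric about $s^\star(r)$, the discrete maximizer is simply the constellation point $s_j$ nearest to $s^\star(r)$. This single observation resolves two issues at once: it shows the regions $\mathcal{R}_j$ are genuine intervals (so only the two immediate neighbors matter, as assumed in \eqref{eq:PAMdetectionregion}), and it places the boundary between $s_{j-1}$ and $s_j$ at the midpoint $s^\star=s_j-d/2$ in the estimate domain.

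It then remains to transport this boundary back to the received-signal domain. Since $r\mapsto s^\star(r)$ is a strictly increasing affine map, inverting $s^\star(b_j^-)=s_j-d/2$ yields $b_j^-=(\sigma_n^2+\rho^2\Sigma^2)(s_j-d/2)/(\rho\Sigma^2)=\rho(s_j-d/2)(1+\sigma_n^2/(\rho^2\Sigma^2))$. Substituting $\sigma_n^2=\sigma_z^2/2$, $\Sigma^2=K/2$, and $\gamma=\rho^2/\sigma_z^2$ collapses the correction factor to $1+1/(\gamma K)$, which is the claimed expression; the recursion $b_j^-=b_{j-1}^+$ and the conventions $b_1^-=-\infty$, $b_{(P-1)K+1}^+=+\infty$ then close the remaining extreme cases.

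I expect the main obstacle to be the bookkeeping of normalization constants rather than any deep step: the clean factor $1+1/(\gamma K)$ emerges only if the per-dimension noise variance ($\sigma_z^2/2$) and the aggregated-symbol variance ($\Sigma^2=K/2$) are both tracked correctly, and only if the prior is centered at the origin, since it is this symmetry that makes the log-prior difference $\ln p_j-\ln p_{j-1}$ collapse to $-d(s_j-d/2)/\Sigma^2$. A secondary point warranting care is justifying the use of the Gaussian law in \eqref{eq:distapprox} rather than the exact lattice law of Lemma~\ref{lemma:exactdist}; here I would emphasize that \eqref{eq:distapprox} is precisely the regime in which $\ln p_j$ is quadratic, and hence the regime in which the boundaries admit the stated closed form.
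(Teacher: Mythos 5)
Your proof is correct and arrives at exactly \eqref{eq:decisionboundary}, but by a genuinely different route than the paper. The paper argues pairwise: substituting \eqref{eq:channeltransfer} into \eqref{eq:detectionregion} gives the adjacent-neighbor crossing point $b_{j}^-=\rho\left(s_{j}-\frac{d}{2}\right)-\frac{\sigma_z^2}{2\rho d}\log\frac{p_{j}}{p_{j-1}}$, after which the Gaussian approximation \eqref{eq:distapprox} is substituted to evaluate the log-prior ratio via the index-to-symbol relation $s_j=(j+K-1-\mu_j)d$. You instead fold the quadratic log-prior and quadratic log-likelihood into a single concave quadratic in the symbol value, complete the square to obtain the MMSE-type shrinkage point $s^\star(r)=\rho\Sigma^2 r/(\sigma_n^2+\rho^2\Sigma^2)$, and observe that MAP detection is nearest-point rounding of $s^\star(r)$, so the boundaries follow by inverting an affine map. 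This buys two things the paper leaves implicit. First, it proves that each $\mathcal{R}_j$ is a genuine interval whose endpoints are determined by the two adjacent neighbors only, which the paper assumes without justification when passing from \eqref{eq:detectionregion} to \eqref{eq:PAMdetectionregion}. Second, by working entirely in the symbol domain with a prior centered at the origin (variance $K/2$ per dimension), you sidestep the index bookkeeping on which the paper's own proof actually slips: taken literally, \eqref{eq:distapprox} (centered at $\mu_j=\frac{P+1}{2}K$ with $j$ starting from $1$) and the substitution $s_j=(j+K-1-\mu_j)d$ use incompatible index conventions, and carrying them through \eqref{eq:decisionboundarynodist} leaves a spurious additive offset $-(K-1)d\,\sigma_z^2/(\rho K)$ that vanishes only for $K=1$; your zero-centered symbol-domain prior, which is the correct centering, yields the factor $1+\frac{1}{\gamma K}$ cleanly. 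The trade-off is generality: the paper's intermediate formula \eqref{eq:decisionboundarynodist} holds for an arbitrary prior and only then specializes to the Gaussian approximation, whereas your shrink-and-round picture exists only because the log-prior is (approximately) quadratic; both proofs share that reliance on \eqref{eq:distapprox} rather than on the exact law of Lemma~\ref{lemma:exactdist}, as you correctly flag at the end.
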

\begin{proof}
    By substituting \eqref{eq:channeltransfer} to \eqref{eq:detectionregion}, we solve the decision boundary $b_{j}^-$ by:
    \begin{equation}
        \label{eq:decisionboundarynodist}
        b_{j}^-=\rho\left(s_{j}-\frac{d}{2}\right)-\frac{\sigma_z^2}{2\rho d}\log\frac{p_{j}}{p_{j-1}}.
    \end{equation}
    Substituting \eqref{eq:distapprox} into \eqref{eq:decisionboundarynodist} and note that $s_j=(j+K-1-\mu_j)d$ gives the desired result.
\end{proof}

\begin{figure}
    \centering
    \includegraphics[width=\linewidth]{./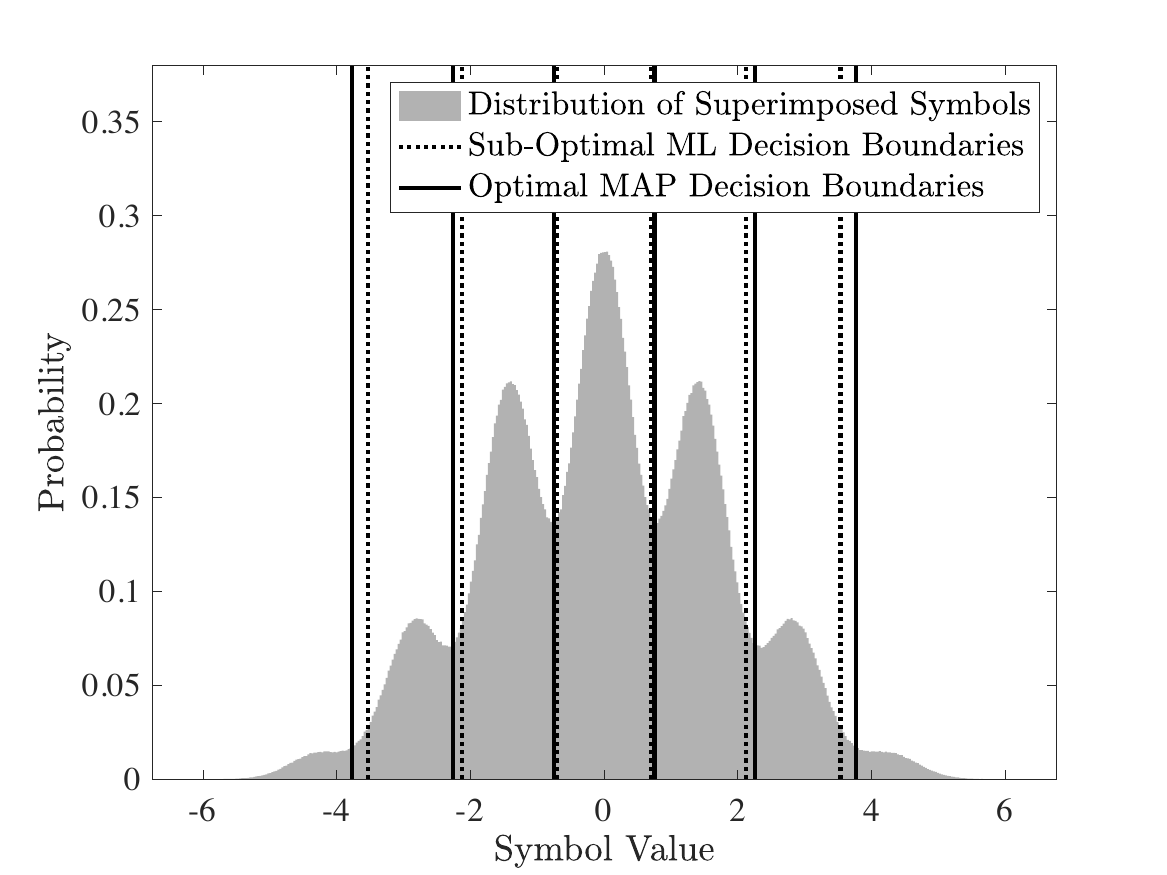}
    \caption{Received symbol distribution and corresponding ML and MAP decision boundaries at $4\text{ dB}$ SNR for $P=2$ and $K=6$.}
    \label{fig:detection}
\end{figure}

Proposition~\ref{prop:PAMdecisionboundary} suggests that $b_{j}^-$ is located with an SNR-related offset, $\frac{1}{\gamma K}$, from the midpoint of $\rho s_{j-1}$ and $\rho s_{j}$. As illustrated in Fig.~\ref{fig:detection}, the offset controls the decision boundary to be closer to the constellation point with smaller prior probability. In other words, the MAP detector tends to detect the aggregated symbol as the more probable constellation point. 
It is also interesting to notice that the ratio between $b_{j}^-$ and $\rho\left(s_{j}-d/2\right)$ (i.e., the midpoint between $\rho s_{j-1}$ and $\rho s_{j}$, which is the ML decision boundary) remains constant for different $j$. However, as the value of $j$ increases, the absolute difference between the ML and MAP decision boundaries increases, same as observed from Fig.~\ref{fig:detection}.
Additionally, since the offset is inversely proportional to the SNR, the MAP decision boundaries approach their ML counterparts as the SNR increases. 
Finally, for a sanity check, the MAP decision boundaries in \eqref{eq:decisionboundary} reduce to their ML counterparts if we set prior probability as uniform.

\section{Digital or Analog AirComp}
\label{sec:error_analysis}
In this section, a comparative analysis is conducted on the error performance of digital with respect to the traditional analog AirComp. Repetition coding is applied to the latter to ensure fair comparisons under the constraint of identical channel uses. Then, conditions favoring digital AirComp are derived.

\subsection{Error Analysis for Analog AirComp with Repetition Coding}
\subsubsection{Analog AirComp with Repetition Coding}
\label{sec:analogscheme}
We consider the simple method of repetition coding to improve the reliability of analog AirComp via repeatedly transmitting each complex data symbol over multiple symbol slots, i.e., $L=R/N$ slots, assuming $L$ is an integer. Then averaging of the received aggregated symbols suppresses the channel noise to improve AirComp reliability.
For complex analog mapping, each pair of analog symbols, say $(x[2n-1],x[2n])$, is mapped to $L$ identical complex symbols $\{m[n,\ell]\}$, generating $N\times L$ symbols in total. Mathematically,
\begin{equation}
\label{eq:analogmap}
    m[n,\ell]=\left(x[2n-1]-\frac{\tau_x}{2}\right)\frac{\sqrt{6}}{\delta_x}+i\left(x[2n]-\frac{\tau_x}{2}\right)\frac{\sqrt{6}}{\delta_x},
\end{equation}
where $\tau_x=x_{\max}+x_{\min}$, and $\delta_x = x_{\max}-x_{\min}$. 
All devices simultaneously transmit the mapped analog symbols over the multi-access channel.
As a result, the received aggregated symbol corresponding to the $(n,\ell)$-th transmitted symbols is given as:
\begin{equation}
    \label{eq:analog_aggregated_symbol}
    r[n,\ell]=\rho\sum_{k=1}^{K}m_{k}[n,\ell]+z[n,\ell].
\end{equation}

The receiver first demaps $r[n,\ell]$ to estimate the  aggregation of transmitted symbols, denoted as $\hat{s}[m,\ell]$, as follows
\begin{equation}
    \label{eq:analogdemap}
    \hat{s}[m,\ell]=
    \begin{cases}
    \Re\left(r\left[\frac{m+1}{2},\ell\right]\right)\frac{\delta_x}{\rho\sqrt{6}}+\frac{\tau_x}{2}K,\ \text{for odd } m,\\
    \Im\left(r\left[\frac{m}{2},\ell\right]\right)\frac{\delta_x}{\rho\sqrt{6}}+\frac{\tau_x}{2}K,\ \text{for even } m.\\
    \end{cases}
\end{equation}
Then $\{\hat{s}[m,\ell]\}$ are averaged over $L$ repetitions to suppress channel noise and generate the desired AirComp result of the $m$-th data values, by: $\hat{y}[m]=\frac{1}{L}\sum_{\ell=1}^{L}[\hat{s}[m,\ell]]$.

\subsubsection{Analog AirComp Error}
The error of analog AirComp with repetition coding can be obtained using the definition in \eqref{eq:MSE} as:
\begin{equation}
    \label{eq:analogerror}
    \mathcal{E}^{(ana)}=\frac{\delta_x^2}{12L}\frac{1}{\gamma}.
\end{equation}
One can observe that repetition coding suppresses error by a factor of $L$. 
Moreover, $\mathcal{E}^{(ana)}$ is inversely proportional to the transmit SNR, $\gamma$. 

\subsection{Error Analysis for Digital AirComp}
\label{sec:digitalaircomperror}
Based on the definition in \eqref{eq:MSE}, the digital AirComp error, denoted as $\mathcal{E}^{(dig)}$, can be obtained as:
\begin{align}
    \mathcal{E}^{(dig)}&=\mathbb{E}\left[\left(\Delta\hat{u}+\left(\frac{\Delta}{2}+x_{\min}\right)K\right.\right.\notag\\
    &\left.\left.\indent-\sum_{k=1}^{K}\left(q\Delta+\frac{\Delta}{2}+x_{\min}+\epsilon\right)\right)^2\right], \\
    &=\Delta^2\mathbb{E}\left[\left(\hat{u}-u\right)^2\right]+K\mathbb{E}\left[\epsilon^2\right]\triangleq\mathcal{E}_{agg}+\mathcal{E}_{qua}, \label{eq:digitalerror}
\end{align}
where \eqref{eq:digitalerror} follows from the independence between $\hat{u}$ and $\epsilon$, and the i.i.d. property of $\epsilon$ as described in Section~\ref{sec:quantization}.
It is observed that $\mathcal{E}^{(dig)}$ consists of two parts. 
The first part is AirComp error assuming no quantization, termed the \emph{aggregation error}, and denoted as $\mathcal{E}_{agg}\triangleq\Delta^2\mathbb{E}\left[\left(\hat{u}-u\right)^2\right]$. 
The second part is the \emph{quantization error}, denoted as $\mathcal{E}_{qua}\triangleq K\mathbb{E}\left[\epsilon^2\right]$. With the assumption on uniform quantization, $\mathcal{E}_{qua}=\frac{\Delta^2}{12}$.

The aggregation error, $\mathcal{E}_{agg}$, is the focus of the following analysis. 
Since an arbitrary $\hat{u}$ is assembled from $L$ aggregated sliced integers, $\{\hat{u}[\ell]\}$, by \eqref{eq:bitassembly}, which are demapped from detected symbols $\{\hat{s}[\ell]\}$ by \eqref{eq:digitalDemap}. 
It follows that:
\begin{align}
    \mathcal{E}_{agg}&=\Delta^2\mathbb{E}\left[\left(\sum_{\ell=1}^{L}\hat{u}[\ell]2^{c_{\ell-1}}-\sum_{\ell=1}^{L}u[\ell]2^{c_{\ell-1}}\right)^2\right],\\
    &=\Delta^2\sum_{\ell=1}^{L}4^{c_{\ell-1}}\mathbb{E}\left[\left(\hat{u}[\ell]-u[\ell]\right)^2\right]\label{eq:aggregationerror1},\\
    &=\Delta^2\sum_{\ell=1}^{L}\frac{4^{c_{\ell-1}}}{d_\ell^2}\mathbb{E}\left[\left(\hat{s}[\ell]-s[\ell]\right)^2\right],\label{eq:aggregationerror2}
\end{align}
where \eqref{eq:aggregationerror1} follows from the independence of $\{\left(\hat{u}[\ell]-u[\ell]\right)\}$.
The result in \eqref{eq:aggregationerror2} shows that $\mathcal{E}_{agg}$ is the weighted sum of the expected symbol detection error, defined as $\mathcal{E}_{det}\triangleq\mathbb{E}\left[\left(\hat{s}[\ell]-s[\ell]\right)^2\right]$.

\begin{Lemma}
    \label{lemma:detectionerror}
    The expected error for the detection of aggregated symbols is given as:
    \begin{equation}
        \label{eq:detectionerror}
        \mathcal{E}_{det}=\sum_{s_j\in\mathcal{S}}p_{j}\sum_{s_m\in\mathcal{S}}(s_m-s_j)^2P_{j\to m}(\sigma_z),
    \end{equation}
    where $p_{j}$ follows from \eqref{eq:distapprox} and $P_{j\to m}(\sigma_z)$ denotes the pairwise error probability that the symbol $s_{j}$ is detected as $s_{m}$, is given as:
    \begin{align}
        P_{j\to m}(\sigma_z)&=\Pr\{\hat{s}=s_{m}|s=s_{j}\}
        \label{eq:pairwiseerrordefinition},\\
        &=Q\left(\frac{b_{m}^--\rho s_{j}}{\frac{\sigma_z}{\sqrt{2}}}\right)-Q\left(\frac{b_{m}^+-\rho s_{j}}{\frac{\sigma_z}{\sqrt{2}}}\right),\label{eq:pairwiseerror}
    \end{align}
    where $Q(x)$ is the Q-function defined as $Q(x)=\frac{1}{\sqrt{2\pi}}\int_x^\infty e^{-\frac{t^2}{2}}dt$.
\end{Lemma}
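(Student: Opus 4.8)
The plan is to evaluate the expected detection error $\mathcal{E}_{det}=\mathbb{E}[(\hat{s}-s)^2]$ directly via the law of total expectation, conditioning on the transmitted superimposed symbol. First I would write
\begin{equation}
\mathcal{E}_{det}=\sum_{s_j\in\mathcal{S}}\Pr\{s=s_j\}\,\mathbb{E}\big[(\hat{s}-s_j)^2\,\big|\,s=s_j\big],
\end{equation}
so that the prior probabilities $\Pr\{s=s_j\}=p_j$ of Lemma~\ref{lemma:exactdist}, in the Normal-approximated form \eqref{eq:distapprox}, supply the outer weighting. Since the MAP detector output $\hat{s}$ takes values in the finite constellation $\mathcal{S}$, I would expand the inner conditional expectation as a finite sum over all candidate decisions $s_m$,
\begin{equation}
\mathbb{E}\big[(\hat{s}-s_j)^2\,\big|\,s=s_j\big]=\sum_{s_m\in\mathcal{S}}(s_m-s_j)^2\,\Pr\{\hat{s}=s_m\mid s=s_j\},
\end{equation}
and then identify $\Pr\{\hat{s}=s_m\mid s=s_j\}$ with the pairwise error probability $P_{j\to m}(\sigma_z)$ defined in \eqref{eq:pairwiseerrordefinition}. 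Substituting this back gives the claimed double-sum expression \eqref{eq:detectionerror} at once.

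The remaining work is to establish the closed form \eqref{eq:pairwiseerror} for $P_{j\to m}(\sigma_z)$. The key observation is that, under the MAP rule, the detector declares $\hat{s}=s_m$ precisely when the received sample falls in the decision region $\mathcal{R}_m=\{r:b_m^-<r<b_m^+\}$ of \eqref{eq:PAMdetectionregion}, whose boundaries are those derived in Proposition~\ref{prop:PAMdecisionboundary}. Conditioned on $s=s_j$, the received PAM sample obeys $r=\rho s_j+z$ with $z$ the real (or imaginary) noise component, so that
\begin{equation}
P_{j\to m}(\sigma_z)=\Pr\{b_m^--\rho s_j<z<b_m^+-\rho s_j\}.
\end{equation}
Normalizing by the per-branch noise standard deviation and expressing the two resulting Gaussian-tail probabilities through the $Q$-function would then yield \eqref{eq:pairwiseerror}.

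The one point demanding care --- and the step I expect to be the main source of error --- is the noise statistics on a single PAM branch. Because the complex AWGN in \eqref{eq:channel} satisfies $z\sim\mathcal{CN}(0,\sigma_z^2)$, its real and imaginary parts each carry variance $\sigma_z^2/2$, so the relevant per-branch standard deviation is $\sigma_z/\sqrt{2}$ rather than $\sigma_z$. This is exactly the factor appearing in the denominators of \eqref{eq:pairwiseerror}, and keeping track of it is what correctly bridges the complex channel model with the decoupled PAM detection used throughout this section. With that factor fixed, the final step is the routine substitution $Q(x)=1-\Phi(x)$ applied to both cumulative-distribution evaluations; beyond the already-established approximate distribution of $\{p_j\}$, no further approximation is needed.
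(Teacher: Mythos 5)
Your proof is correct. The paper actually states Lemma~\ref{lemma:detectionerror} without any proof (no appendix is devoted to it), treating it as exactly the routine argument you reconstruct: the law of total expectation over the transmitted superimposed symbol, expansion of the conditional mean-square error over the finite set of detector outputs, and evaluation of $\Pr\{\hat{s}=s_m\mid s=s_j\}$ as the probability that $r=\rho s_j+z$ lands in the MAP region $(b_m^-,b_m^+)$ of \eqref{eq:PAMdetectionregion}. Your handling of the per-branch noise statistics --- that the real/imaginary component of the $\mathcal{CN}(0,\sigma_z^2)$ noise has variance $\sigma_z^2/2$, which is what produces the $\sigma_z/\sqrt{2}$ denominators in \eqref{eq:pairwiseerror} --- is precisely the one nontrivial detail, and you got it right.
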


It should be emphasized that \eqref{eq:detectionerror} reduces to the symbol error rate for traditional digital modulation if we set $(s_m-s_j)=1, \forall m\neq j$, so that any detection error is treated as single symbol error with equal weight.
In other words, the errors in terms of data values, $\{\left(s_m-s_j\right)^2\}$, and weights, $\{p_{j}\}$, differentiate digital AirComp from traditional digital modulation.
Following \eqref{eq:aggregationerror2} and via derivation of the asymptotic expression of \eqref{eq:detectionerror}, we characterize $\mathcal{E}_{agg}$ as follows.
\begin{Lemma}
    \label{lemma:aggregationerror}
    The aggregation error satisfies:
    \begin{equation}
        \label{eq:aggregationerror}
        \lim_{\sigma_z\to0}\mathcal{E}_{agg} = \Delta^2\sum_{\ell=1}^{L}4^{c_{\ell-1}}A_\ell Q\left(\frac{\rho d_\ell}{\sqrt{2}\sigma_z}\right),
    \end{equation}
    where the term $A_\ell\triangleq\sum_{s_j\in\mathcal{S}_\ell}\left(\sqrt{p_{j-1}p_{j}}+ \sqrt{p_{j} p_{j+1}} \right)$ is defined to simplify the notation.
\end{Lemma}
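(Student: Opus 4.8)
The plan is to start from the slice-decomposed expression \eqref{eq:aggregationerror2}, which reduces $\mathcal{E}_{agg}$ to a weighted sum of per-slice detection errors $\mathcal{E}_{det}$, and then to extract the low-noise asymptotics of each $\mathcal{E}_{det}$ from the closed form in Lemma~\ref{lemma:detectionerror}. First I would write the slice-$\ell$ detection error as the double sum $\sum_{s_j\in\mathcal{S}_\ell} p_{j}\sum_{s_m\in\mathcal{S}_\ell}(s_m-s_j)^2 P_{j\to m}(\sigma_z)$ and argue that, as $\sigma_z\to0$, it is dominated by its nearest-neighbor terms $m=j\pm1$. This is because each $P_{j\to m}$ in \eqref{eq:pairwiseerror} is a difference of $Q$-functions whose arguments scale like $1/\sigma_z$; reaching a symbol $|m-j|$ steps away requires crossing a boundary at distance $\approx|m-j|\tfrac{\rho d_\ell}{2}$ from $\rho s_j$, giving an exponent $\propto (|m-j|)^2\rho^2 d_\ell^2/\sigma_z^2$, so every term with $|m-j|\ge2$ is exponentially negligible relative to the $m=j\pm1$ terms. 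Retaining only the nearest neighbors replaces $(s_m-s_j)^2$ by $d_\ell^2$.

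Next I would evaluate the two surviving probabilities using the MAP boundaries. From the pre-approximation form \eqref{eq:decisionboundarynodist} (with $d\to d_\ell$), the relevant distances are $\rho s_j-b_j^-=\tfrac{\rho d_\ell}{2}+\tfrac{\sigma_z^2}{2\rho d_\ell}\log\tfrac{p_j}{p_{j-1}}$ for the left neighbor and $b_j^+-\rho s_j=\tfrac{\rho d_\ell}{2}-\tfrac{\sigma_z^2}{2\rho d_\ell}\log\tfrac{p_{j+1}}{p_j}$ for the right. As $\sigma_z\to0$ the dominant contribution to $P_{j\to j-1}$ comes from the nearer boundary $b_j^-$, i.e. from $Q\big(\tfrac{\rho s_j-b_j^-}{\sigma_z/\sqrt2}\big)$; expanding the square in its exponent, the cross term produces exactly $\tfrac12\log\tfrac{p_j}{p_{j-1}}$ while the remaining $O(\sigma_z^2)$ piece vanishes. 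Hence $P_{j\to j-1}\sim\sqrt{p_{j-1}/p_j}\,Q\big(\tfrac{\rho d_\ell}{\sqrt2\sigma_z}\big)$, and the key cancellation occurs upon multiplying by the prior: $p_j P_{j\to j-1}\sim\sqrt{p_{j-1}p_j}\,Q\big(\tfrac{\rho d_\ell}{\sqrt2\sigma_z}\big)$. The symmetric computation gives $p_j P_{j\to j+1}\sim\sqrt{p_j p_{j+1}}\,Q\big(\tfrac{\rho d_\ell}{\sqrt2\sigma_z}\big)$, so summing over $j$ yields $\mathcal{E}_{det}\sim d_\ell^2 A_\ell\,Q\big(\tfrac{\rho d_\ell}{\sqrt2\sigma_z}\big)$ with $A_\ell$ as defined. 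Substituting into \eqref{eq:aggregationerror2}, the factor $d_\ell^2$ cancels against the weight $1/d_\ell^2$, delivering the claimed expression.

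The main obstacle I anticipate is upgrading these asymptotic equivalences from heuristic to rigorous, and two points need care. First, the reduction to nearest neighbors requires showing the discarded terms are $o\big(Q(\tfrac{\rho d_\ell}{\sqrt2\sigma_z})\big)$; since $\mathcal{S}_\ell$ is finite this reduces to finitely many $Q$-function ratios that each tend to zero, but a uniform statement must be made. Second, and more delicately, the equivalence $P_{j\to j-1}\sim\sqrt{p_{j-1}/p_j}\,Q(\cdot)$ relies on the $Q$-function tail $Q(x)\sim e^{-x^2/2}/(x\sqrt{2\pi})$: I must verify that the $\sigma_z$-dependent argument $\tfrac{\rho s_j-b_j^-}{\sigma_z/\sqrt2}$ and the reference argument $\tfrac{\rho d_\ell}{\sqrt2\sigma_z}$ share the same leading prefactor, so their ratio converges to $1$ and only the exponent's cross term survives as the multiplicative factor $\sqrt{p_{j-1}/p_j}$. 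Confirming that these prefactor ratios tend to unity, and that the boundary's $O(\sigma_z^2)$ offset leaves the limit undisturbed, is the technical crux of the argument.
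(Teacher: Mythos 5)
Your proposal is correct and follows essentially the same route as the paper's proof: starting from \eqref{eq:aggregationerror2} and Lemma~\ref{lemma:detectionerror}, you establish nearest-neighbor dominance of the pairwise error probabilities and extract the factors $\sqrt{p_{j-1}/p_{j}}$ from the $O(\sigma_z^2)$ MAP boundary offsets, obtaining $\mathcal{E}_{det}\sim d_\ell^2 A_\ell\, Q\!\left(\frac{\rho d_\ell}{\sqrt{2}\sigma_z}\right)$ and then cancelling $d_\ell^2$ against the weight $1/d_\ell^2$. The only difference is a technicality: the paper evaluates the limits $\lim_{\sigma_z\to 0} P_{j\to m}/Q\!\left(\frac{\rho d_\ell}{\sqrt{2}\sigma_z}\right)$ via L'Hopital's rule, whereas you use the Gaussian tail expansion $Q(x)\sim e^{-x^2/2}/(x\sqrt{2\pi})$; the two devices are interchangeable here.
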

\begin{proof}
    (See Appendix \ref{app:detectionerror}).
\end{proof}
The MAP detector is closely related to the term $A_\ell$. If $\{s_j\}$ are equiprobable, for which the ML detector is optimal, this coefficient reduces to 2.
Consider the simple scheme of uniform bit allocation to slices, termed the \emph{uniform bit-slicing} scheme. Its generalization is the topic of the next section. Under this scheme, $L$ is assumed to be integer multiply of $B$, such that each sliced integer has a bit-width of $b=B/L$.

\begin{Proposition}
    \label{prop:digitalaircomperror}
    The digital AirComp error with uniform bit-slicing is given as:
    \begin{equation}
        \label{eq:digitalerroruniform}
        \mathcal{E}^{(dig,u)} = \Delta^2ACQ\left(\frac{\rho d}{\sqrt{2}\sigma_z}\right) + \frac{\Delta^2}{12}K,
    \end{equation}
    where the term $C\triangleq(4^B-1)/(4^b-1)$ is defined to simplify the notation.
\end{Proposition}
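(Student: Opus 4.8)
The plan is to specialize the general aggregation-error formula of Lemma~\ref{lemma:aggregationerror} to the uniform bit-slicing scheme and then reattach the quantization error. First I would invoke the decomposition $\mathcal{E}^{(dig)}=\mathcal{E}_{agg}+\mathcal{E}_{qua}$ established in \eqref{eq:digitalerror}, where the quantization term is already pinned down as $\mathcal{E}_{qua}=K\mathbb{E}[\epsilon^2]=\frac{\Delta^2}{12}K$ by the i.i.d.\ uniform quantization-noise model of Section~\ref{sec:quantization}. This term is independent of the slicing scheme and carries over verbatim, so the entire task reduces to evaluating $\mathcal{E}_{agg}$ in the high-reliability (small-$\sigma_z$) regime supplied by Lemma~\ref{lemma:aggregationerror}; the proposition is therefore understood as the asymptotic expression inherited from that lemma.

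Next I would impose the defining conditions of uniform bit-slicing, namely $b_\ell=b=B/L$ for every $\ell$. The key observation is that this renders every per-slice quantity identical across $\ell$: each slice uses the same $P=2^{b}$-point constellation, so by \eqref{eq:constellationspacing} the spacing $d_\ell=d$ is common; and since the superimposed-symbol distribution in \eqref{eq:distapprox} is parametrized solely through $\mu_j$ and $\sigma_j^2$, which depend only on the pair $(P,K)$, the coefficient $A_\ell=\sum_{s_j\in\mathcal{S}_\ell}\left(\sqrt{p_{j-1}p_{j}}+\sqrt{p_{j}p_{j+1}}\right)$ collapses to a single constant $A$. The only factor still varying with $\ell$ is the geometric weight $4^{c_{\ell-1}}=4^{(\ell-1)b}$ inherited from the slice-assembly step \eqref{eq:bitassembly} via the cumulative bit-width $c_\ell=\sum_{j=1}^{\ell}b_j$. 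Pulling the common factor $\Delta^2 A\,Q\left(\frac{\rho d}{\sqrt{2}\sigma_z}\right)$ out of the sum in \eqref{eq:aggregationerror} then leaves only the geometric series $\sum_{\ell=1}^{L}4^{(\ell-1)b}=\sum_{\ell=0}^{L-1}(4^{b})^{\ell}=\frac{4^{bL}-1}{4^{b}-1}$.

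The single computational step deserving care is closing this series into the advertised constant $C$: uniform allocation forces $bL=B$, hence $4^{bL}=4^{B}$ and the sum equals $\frac{4^{B}-1}{4^{b}-1}=C$. Substituting back gives $\mathcal{E}_{agg}=\Delta^2 A C\,Q\left(\frac{\rho d}{\sqrt{2}\sigma_z}\right)$, and adding $\mathcal{E}_{qua}=\frac{\Delta^2}{12}K$ yields exactly \eqref{eq:digitalerroruniform}. I expect the main (and essentially only) obstacle to be bookkeeping rather than analysis: one must justify that $A_\ell$ is genuinely slice-independent, which rests on the fact that both the exact prior of Lemma~\ref{lemma:exactdist} and its Normal approximation \eqref{eq:distapprox} are determined solely by $(P,K)$, held fixed under uniform slicing, and one must read off the weight exponent $c_{\ell-1}=(\ell-1)b$ correctly so that the exponents telescope into the geometric progression above.
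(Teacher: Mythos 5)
Your proof is correct and follows exactly the route the paper intends (the paper states Proposition~\ref{prop:digitalaircomperror} without an explicit proof precisely because it is this immediate specialization): take the decomposition \eqref{eq:digitalerror}, apply Lemma~\ref{lemma:aggregationerror} with $b_\ell=b$, $d_\ell=d$, $A_\ell=A$, sum the geometric series $\sum_{\ell=1}^{L}4^{(\ell-1)b}=(4^{B}-1)/(4^{b}-1)=C$, and add $\mathcal{E}_{qua}=\frac{\Delta^2}{12}K$. Your remark that the stated equality should be read as the asymptotic (small-$\sigma_z$) expression inherited from Lemma~\ref{lemma:aggregationerror} is also the correct interpretation.
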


\subsection{Comparison between Digital and Analog AirComp}
We combine \eqref{eq:analogerror} and \eqref{eq:digitalerroruniform} to derive as shown below the SNR regime where digital AirComp outperforms its analog counterpart to guide the selection between the two schemes. 
\begin{Theorem}
    \label{theorem:digitalvsanalog}
    The error of digital AirComp is smaller than that of the analog AirComp (i.e., $\mathcal{E}^{(dig,u)}\leq\mathcal{E}^{(ana)}$) if the SNR satisfies:
    \begin{equation}
        \label{eq:snrregime}
        \gamma \in \left[-\frac{4}{d^2}W_{r,-1}\left(-\frac{d^2 4^B}{24LAC}\right), -\frac{4}{d^2}W_{r,-2}\left(-\frac{d^24^B}{24LAC}\right)\right],
    \end{equation}
    where $W_r(\cdot)$ denotes the generalized $r$-Lambert function with $r=\frac{K}{6AC}\in(0,1/e^2)$\textup{\cite{mezo_generalization_2017}}.
\end{Theorem}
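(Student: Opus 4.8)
The plan is to reduce the comparison $\mathcal{E}^{(dig,u)}\leq\mathcal{E}^{(ana)}$ to a single transcendental inequality in $\gamma$ and then invert it using the generalized $r$-Lambert function. First I would combine Proposition~\ref{prop:digitalaircomperror} with \eqref{eq:analogerror} to write the condition as
\[
    \Delta^2 AC\, Q\!\left(\frac{\rho d}{\sqrt{2}\sigma_z}\right) + \frac{\Delta^2}{12}K \leq \frac{\delta_x^2}{12L\gamma}.
\]
Using $\Delta=\delta_x/2^B$, so that $\Delta^2=\delta_x^2/4^B$, and noting that $\frac{\rho d}{\sqrt{2}\sigma_z}=\frac{d}{\sqrt{2}}\sqrt{\gamma}$ since $\gamma=\rho^2/\sigma_z^2$, I would cancel $\delta_x^2$ and multiply through by $4^B$ to reach the cleaner form
\[
    AC\, Q\!\left(\frac{d\sqrt{\gamma}}{\sqrt{2}}\right) + \frac{K}{12} \leq \frac{4^B}{12L\gamma}.
\]

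Next, since the regime of interest is moderate-to-high SNR, I would replace the $Q$-function by its exponential approximation $Q(x)\approx\tfrac{1}{2}e^{-x^2/2}$, under which $Q(d\sqrt{\gamma}/\sqrt{2})\approx\tfrac{1}{2}e^{-d^2\gamma/4}$. Multiplying the inequality by $12\gamma$ and introducing the change of variable $w=-d^2\gamma/4$ (so that $\gamma=-4w/d^2$ and $e^{-d^2\gamma/4}=e^{w}$), then clearing the negative prefactor $-d^2/(24AC)$ (which reverses the inequality), the condition collapses to
\[
    w e^{w} + r\,w \geq -\frac{d^2 4^B}{24LAC}, \qquad r=\frac{K}{6AC}.
\]
The left-hand side is exactly the defining expression of the $r$-Lambert function, so the associated equality $we^w+rw=-d^24^B/(24LAC)$ has roots $w=W_r(-d^24^B/(24LAC))$. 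Translating back through $\gamma=-4w/d^2$ produces the two endpoints appearing in \eqref{eq:snrregime}.

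The hard part will be the branch analysis that identifies precisely which roots bound the admissible interval and guarantees that two of them are real. The function $g(w)=we^w+rw$ has derivative $g'(w)=(1+w)e^w+r$, so its critical points solve $(1+w)e^w=-r$; because $(1+w)e^w$ attains its minimum $-e^{-2}$ at $w=-2$, two critical points exist exactly when $r<1/e^2$. This is the origin of the stated hypothesis $r\in(0,1/e^2)$: it forces $g$ to be non-monotone, so the level set $g(w)=c$ supplies two finite roots on the branches labelled $W_{r,-1}$ and $W_{r,-2}$. I would then verify that these roots straddle the minimizer of $g$, that the inequality $g(w)\geq c$ holds for $w$ lying between them, and that the orientation-reversing map $\gamma=-4w/d^2$ carries that $w$-interval onto the closed SNR interval claimed, with both endpoints positive and correctly ordered. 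A closing sanity check would confirm that the exponential approximation error is negligible near the genuine crossing point of $\mathcal{E}^{(dig,u)}$ and $\mathcal{E}^{(ana)}$, so that the approximate boundaries faithfully delineate the true favorable regime.
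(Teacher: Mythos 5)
Your route is essentially the paper's own proof. The paper's Appendix D performs exactly your reduction: it combines Proposition~\ref{prop:digitalaircomperror} with \eqref{eq:analogerror}, applies the exponential bound on the Q-function to reach $\gamma e^{-d^2\gamma/4}+\tfrac{K}{6AC}\gamma\leq\tfrac{4^B}{6LAC}$ (its \eqref{eq:digitalerrorlessthananalog}), and then inverts this via the defining equation of the $r$-Lambert function with $t=0$, $r=\tfrac{K}{6AC}$, $a=\tfrac{4^B}{6LAC}$ and the scaling constant $-d^2/4$ (your change of variables $w=-d^2\gamma/4$ is the correct one; the paper's appendix writes $c=-\tfrac{d}{4}$, evidently a typo). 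Your only real departure is that you carry out the branch analysis of $g(w)=we^w+rw$ from first principles, including the origin of the threshold $r<1/e^2$, where the paper simply cites the known branch structure of $W_r$; that makes your argument more self-contained, and it is consistent with the structure the paper itself records in the proof of Corollary~\ref{corollary:limitofK}.

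Two corrections are needed, both easy. First, the theorem is a one-directional (sufficient) condition, and the paper gets it by using $Q(x)\leq\tfrac{1}{2}e^{-x^2/2}$ as an \emph{upper bound}: overestimating $\mathcal{E}^{(dig,u)}$ and requiring the overestimate to be at most $\mathcal{E}^{(ana)}$ guarantees the true inequality throughout \eqref{eq:snrregime}. Your substitution $Q(x)\approx\tfrac{1}{2}e^{-x^2/2}$ followed by a closing ``sanity check'' cannot deliver the ``if'' claim rigorously; replace the approximation by the inequality and your algebra goes through unchanged with the correct logical direction. Second, a slip in the branch analysis: the roots $W_{r,-2}(c)$ and $W_{r,-1}(c)$ straddle the local \emph{maximizer} of $g$ (the critical point in $(-\infty,-2)$), not its minimizer; $g$ rises from the level $c$ at $W_{r,-2}$, peaks, and falls back to $c$ at $W_{r,-1}$, which is precisely why $g\geq c$ holds between them. (The ``minimizer'' picture is correct only after mapping back to $\gamma$-coordinates.) Relatedly, the full solution set of $g(w)\geq c$ also contains the piece $[W_{r,0}(c),\infty)$, i.e., a low-SNR interval $\bigl(0,-\tfrac{4}{d^2}W_{r,0}(c)\bigr]$ where the bounded inequality also holds because $\mathcal{E}^{(ana)}$ diverges as $\gamma\to 0$; discarding it, as the paper implicitly does, is legitimate for a sufficiency statement, but your write-up should note the discard rather than suggest the middle interval is the entire solution set.
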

\begin{proof}
    (See Appendix \ref{app:digitalvsanalog})
\end{proof}

\begin{figure}
    \centering
    \includegraphics[width=\linewidth]{./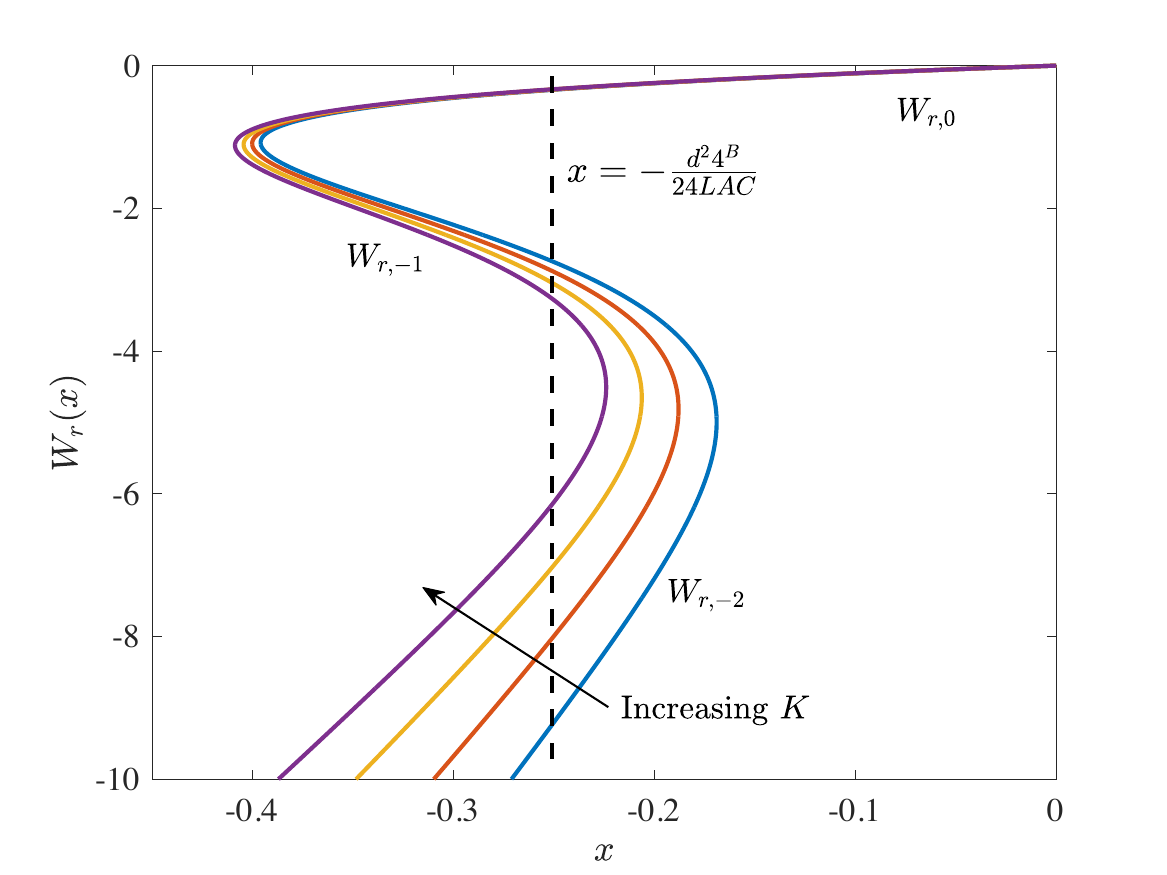}
    \caption{The effect of increasing $K$ on the $r$-Lambert function and thus the SNR regime in Theorem \ref{theorem:digitalvsanalog}.}
    \label{fig:lambertwithdiffr}
\end{figure}

Theorem \ref{theorem:digitalvsanalog} reveals that the relative performance is affected by the number of participating devices, $K$, while other parameters are fixed.
To be specific, according to the property of $W_r$, increasing K causes the left boundary of \eqref{eq:snrregime} to increase and its right boundary to reduce. 
This property is illustrated in Fig.~\ref{fig:lambertwithdiffr}, where the length of the segment of dash line between its intersection with $W_{-1}$ and $W_{-2}$ is proportional to the SNR regime given by \eqref{eq:snrregime}.
It is possible for \eqref{eq:snrregime} to reduce to an empty set as $K$ increases, which is due to the accumulated quantization error. To prevent this, $K$ must satisfy the condition given by the following Corollary.
\begin{Corollary}
    \label{corollary:limitofK}
    The SNR regime given by \eqref{eq:snrregime} is not empty if $r=\frac{K}{6AC}$ satisfies:
    \begin{equation}
        -r\left[W_{-1}\left(-re\right)+\frac{1}{W_{-1}\left(-re\right)}-2\right] < \frac{1}{8L},
    \end{equation}
    where $W_{-1}$ is the secondary branch of the standard Lambert function.
\end{Corollary}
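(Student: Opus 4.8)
The plan is to reduce non-emptiness of the interval in \eqref{eq:snrregime} to a tangency condition on the $r$-Lambert function and then convert the resulting exact inequality into the stated sufficient form. By definition \cite{mezo_generalization_2017}, $W_{r}(t)$ is a root $x$ of $f(x)\triangleq xe^{x}+rx=t$, so the two endpoints of \eqref{eq:snrregime}, namely $-\tfrac{4}{d^{2}}W_{r,-1}(t)$ and $-\tfrac{4}{d^{2}}W_{r,-2}(t)$ with $t=-\tfrac{d^{2}4^{B}}{24LAC}$ and $r=\tfrac{K}{6AC}$, are affine images of two real roots of $f(x)=t$. The interval is non-empty precisely when these two branches take distinct real values, and they coalesce at the branch point of the $r$-Lambert function, i.e.\ where the horizontal line $y=t$ is tangent to the graph of $f$. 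Hence the first step is to locate this tangency.

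Second, I would compute the critical points. Since $f'(x)=(1+x)e^{x}+r$, setting $f'(x^{*})=0$ gives $(1+x^{*})e^{1+x^{*}}=-re$, so $1+x^{*}=W(-re)$. For $r\in(0,1/e^{2})$ the argument $-re\in(-1/e,0)$ admits two real Lambert branches; the one relevant to the coalescing endpoints is the local maximum at $x^{*}<-2$, i.e.\ the secondary branch $1+x^{*}=W_{-1}(-re)$. Substituting $e^{x^{*}}=-r/(1+x^{*})=-r/W_{-1}(-re)$ into $f(x^{*})=x^{*}e^{x^{*}}+rx^{*}$ and simplifying yields the closed form $f(x^{*})=r\!\left(W_{-1}(-re)+\tfrac{1}{W_{-1}(-re)}-2\right)$. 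Because both endpoints are negative roots, non-emptiness is equivalent to $t<f(x^{*})$, which (both sides being negative) rearranges to $-r\!\left(W_{-1}(-re)+\tfrac{1}{W_{-1}(-re)}-2\right)<\tfrac{d^{2}4^{B}}{24LAC}$. This is the exact condition; the remaining monotone root $W_{r,-2}$ stays real throughout the regime of interest, consistent with the shrink-to-empty behavior noted after Theorem~\ref{theorem:digitalvsanalog}.

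Third, I would obtain the clean sufficient form by bounding the right-hand side below by $\tfrac{1}{8L}$, i.e.\ by showing $AC\le\tfrac{d^{2}4^{B}}{3}$. The key observation is $A=A_{\ell}\le2$: applying AM--GM termwise, $\sqrt{p_{j-1}p_{j}}\le\tfrac12(p_{j-1}+p_{j})$, and summing over the constellation with $\sum_{j}p_{j}=1$ gives $A_{\ell}\le2$, with equality only in the equiprobable (ML) case. Combined with the elementary bound $C=\tfrac{4^{B}-1}{4^{b}-1}\le\tfrac{4^{B}}{2^{b}-1}=\tfrac{d^{2}4^{B}}{6}$ (using $d^{2}=6/(2^{b}-1)$), this gives $AC\le\tfrac{d^{2}4^{B}}{3}$ and hence $\tfrac{d^{2}4^{B}}{24LAC}\ge\tfrac{1}{8L}$. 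Therefore $-r(\cdots)<\tfrac{1}{8L}$ implies the exact condition, so the regime is non-empty.

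The main obstacle I anticipate is the branch bookkeeping in the first two steps: correctly matching the endpoints $W_{r,-1}(t),W_{r,-2}(t)$ to the pair of roots that merge (those straddling the local maximum rather than the local minimum), so that the secondary branch $W_{-1}$ and not the principal branch appears, and fixing the direction of $t<f(x^{*})$ given that $t,f(x^{*})<0$. The Lambert algebra of the second step and the AM--GM bound of the third are then routine.
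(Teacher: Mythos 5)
Your proposal follows essentially the same route as the paper's proof: reduce non-emptiness of \eqref{eq:snrregime} to the branch structure of $f_r(x)=xe^x+rx$, locate the relevant critical point via $1+x^*=W_{-1}(-re)$, derive the closed form $f_r(x^*)=r\left(W_{-1}(-re)+\tfrac{1}{W_{-1}(-re)}-2\right)$, and then replace the exact threshold $\tfrac{d^24^B}{24LAC}$ by the clean constant $\tfrac{1}{8L}$ using $A\le 2$ (your AM--GM argument) and an elementary bound on $C$. These computations are correct, and the final bounding step goes through under either of the paper's two (mutually inconsistent) normalizations of $d$ — your $d^2=6/(2^b-1)$ from Section IV or the $d_\ell^2=6/(4^{b_\ell}-1)$ of \eqref{eq:constellationspacing} — so that discrepancy is harmless.

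There is, however, one genuine slip, at exactly the spot you flagged as the anticipated obstacle. Non-emptiness of the interval in \eqref{eq:snrregime} is \emph{not} equivalent to $t<f_r(x^*)$ alone: it also requires that $W_{r,-1}(t)$ be real, i.e.\ that $t$ lie \emph{above} the other branch point $f_r\big(W_0(-re)-1\big)$, the local \emph{minimum} of $f_r$, where $W_{r,-1}$ meets $W_{r,0}$. If $t$ fell below that value, the equation $f_r(x)=t$ would have a single negative root (on $W_{r,-2}$) and the upper endpoint of \eqref{eq:snrregime} would be undefined, so your claimed equivalence — and hence your sufficiency chain — breaks. Your dismissal sentence targets the wrong branch: $W_{r,-2}(t)$ is automatically real for every $t\le f_r(x^*)$, so its ``staying real'' was never the issue; the branch whose existence must be checked is $W_{r,-1}$. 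The paper's proof makes this explicit by imposing the two-sided condition $f_r\big(W_0(-re)-1\big)<t<f_r\big(W_{-1}(-re)-1\big)$ and asserting that the left inequality holds for any $K>0$. To close the gap you can reuse your own algebra at the minimum: with $v=W_0(-re)$ and $u=-v\in(0,1)$ one gets $f_r\big(W_0(-re)-1\big)=r\left(v+\tfrac1v-2\right)=-(1+u)^2e^{-(1+u)}\le-\tfrac1e$, whereas $|t|=\tfrac{4^B}{4LA(4^B-1)}$ is far smaller than $1/e$ for the parameter ranges of interest, so the left inequality indeed holds. With that check added, your argument coincides with the paper's proof.
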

\begin{proof}
    (See Appendix \ref{app:limitofK})
\end{proof}
The largest possible $K$ given by Corollary~\ref{corollary:limitofK} is typically a sufficiently large number for the said SNR range to be practical. For example, under the system setup of $B=10$ and $L=5$, $K\leq2344$ guarantees a non-empty SNR regime where digital AirComp is preferred.

\section{Extension: Adaptive Digital AirComp}
\label{sec:extension_adaptive}
In the preceding section, uniform bit-slicing is assumed and the quantization precision, $B$, is fixed.
In this section, building on the scheme in the preceding section, we consider the optimization of these parameters according to the heterogeneity of bit importance level and channel state, giving the name \emph{adaptive digital AirComp}.

The problem of minimizing the digital AirComp error by optimizing the bit-slicing scheme and quantization precision is formulated as follows:
\begin{equation}
    \label{eq:optimizationerror}
    \begin{aligned}
        \min_{\{B, b_\ell\}} \quad &\mathcal{E}_{agg}+\mathcal{E}_{qua}\\
        \text{s.t.} \quad &\{b_\ell\}\in\mathbb{Z}_+,\\
        &\sum_{\ell=1}^{L}b_\ell=B.\\
    \end{aligned}\tag{P1}
\end{equation}
In practice, the quantization precision, $B$, is associated with the quantizer hardware, which has limited flexibility to adjust to the channel conditions. Therefore, $B$ is optimized offline using the average SNR, denoted as $\bar{\gamma}$. 
On the other hand, the bit-slicing scheme, $\{b_\ell\}$, can be adjusted in real time based on the instantaneous SNR.
Hence, the solution to Problem~\eqref{eq:optimizationerror} is generated by solving two sub-problems: the offline optimization of $B$ and the online optimization of $\{b_\ell\}$.

The first sub-problem can be formulated as:
\begin{equation}
    \label{eq:optimizationerrorB}
    \begin{aligned}
        \min_{B} \quad &\bar{\mathcal{E}}_{agg}+\mathcal{E}_{qua}\\
        \text{s.t.} \quad &\{b_\ell\}\in\mathbb{Z}_+,\\
        &\sum_{\ell=1}^{L}b_\ell=B,\\
    \end{aligned}\tag{P2}
\end{equation}
where $\bar{\mathcal{E}}_{agg}$ is $\mathcal{E}_{agg}$ evaluated at $\gamma=\bar{\gamma}$.
One can observe that $B$ controls a trade-off between the aggregation error and the quantization error.
On one hand, the quantization error, $\mathcal{E}_{qua}$, monotonically decreases with the increase of $B$.
On the other hand, the aggregation error, $\mathcal{E}_{agg}$, tends to increase with $B$, as the increase of $B$ results in the increase of the alphabet size of the transmitted symbol, increasing the detection error rate. 
For example, suppose we use a uniform-by-best-effort (UBE) scheme for $B/L\notin\mathbb{Z}$, where we make the best effort on equalizing the bit allocation to slices, the trade-off between $\mathcal{E}_{agg}$ and $\mathcal{E}_{qua}$ for different $B$ is shown in Fig.~\ref{fig:aggregation_quantization_tradeoff}.
The standard approach to Problem \eqref{eq:optimizationerrorB} is through numerical methods such as exhaustive search. However, the complexity is prohibitive, with $\binom{B-1}{L-1}$ evaluations of the objective function for each $B$.
For this reason, we adopt a suboptimal solution by searching $B$ under the assumption of UBE bit-slicing scheme, therefore significantly reducing the search complexity to one evaluation per $B$.

\begin{figure}
    \centering
    \includegraphics[width=\linewidth]{./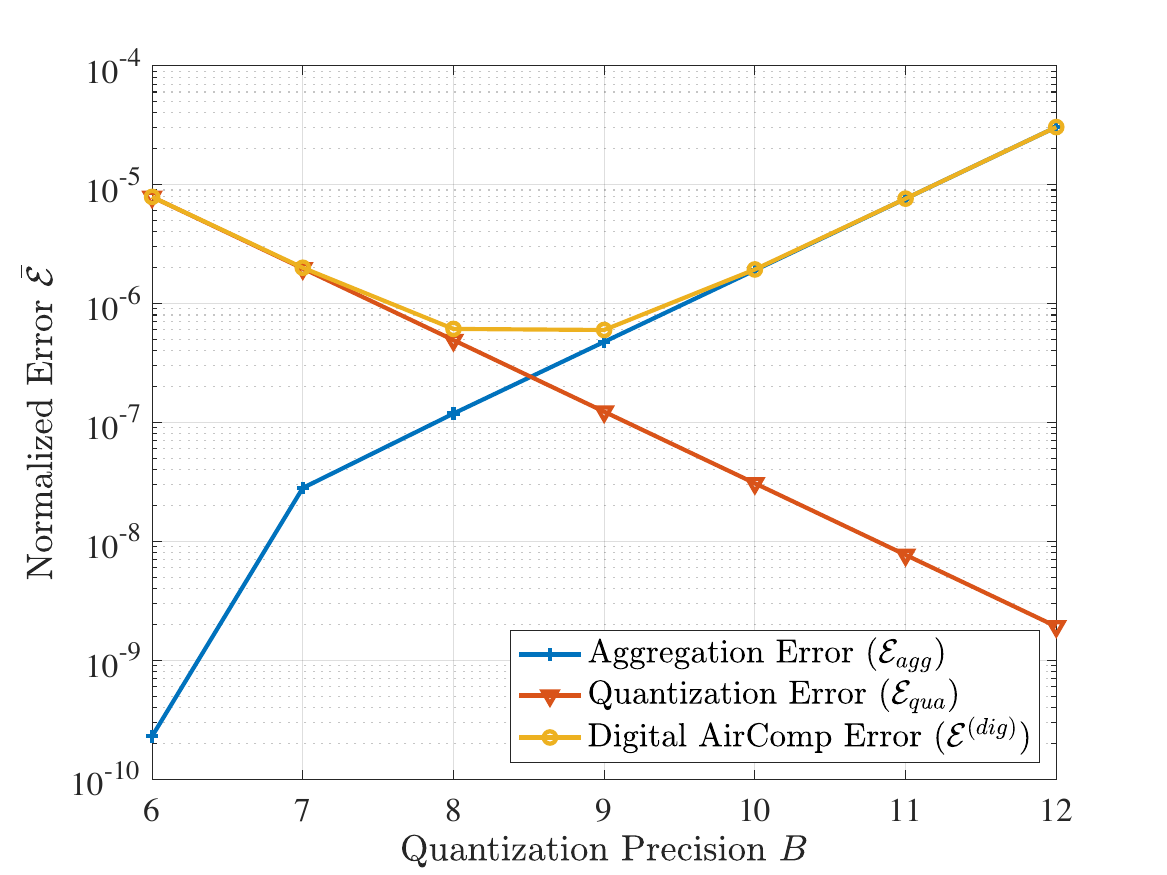}
    \caption{The trade-off between aggregation and quantization errors by varying the quantization precision for $\gamma = 15 \text{ dB}$.}
    \label{fig:aggregation_quantization_tradeoff}
\end{figure}

The second sub-problem is to optimize the bit-slicing scheme, $\{b_\ell\}$, for a given $B^\star$. This factors in the heterogeneous importance levels of bits depending on their positions in a binary representation of a data value.
It can be observed that $\mathcal{E}_{qua}$ is a constant due to the fixed quantization precision, hence the sub-problem is equivalent to minimizing $\mathcal{E}_{agg}$ as follows:
\begin{equation}
    \label{eq:optimizationerrorb}
    \begin{aligned}
        \min_{\{b_\ell\}} \quad &\mathcal{E}_{agg}\\
        \text{s.t.} \quad &\{b_\ell\}\in\mathbb{Z}_+, \\
        &\sum_{\ell=1}^{L}b_\ell=B^\star.
    \end{aligned}\tag{P3}
\end{equation}
As Problem~\eqref{eq:optimizationerrorb} is integer constrained, and $\{b_\ell\}$ are coupled, the standard approach to it is through exhaustive search. We propose to exploit the bit importance to reduce the search domain of $\{b_\ell\}$ as described in the following. 
Because bits in the binary representation of $q$ are not equally important, that is, from the more significant bit (MSE, bit with highest-order place in the binary representation) to the less significant bit (LSB, bit with the lowest-order place), the effect of bit error on the aggregation error decreases. 
Also note that for the bit-slicing operation in \eqref{eq:bit-slicing}, the leftmost slice, $b_1$, corresponds to the bits with lower significance.
Therefore, we keep $\{b_\ell\}$ in non-increasing order to prioritize the accurate aggregation of more significant bits. 
In this way, the number of objective function evaluations is reduced from $L$-composition of $B$ to $L$-partition of $B$. For example, the number of evaluations is reduced from $35$ to $5$ for $B=8$ and $L=4$.

The proposed algorithm for \emph{adaptive digital AirComp} is summarized in Algorithm~\ref{alg:adaptive_digital_aircomp}.

\begin{algorithm}
    \caption{Adaptive Digital AirComp Scheme}
    \label{alg:adaptive_digital_aircomp}
    \KwIn{$\bar{\gamma}$, $\gamma$, $L$, $K$}
    \textbf{Offline Optimization of $B$:} Search for solution to Problem~\eqref{eq:optimizationerrorB} with the UBE bit-slicing scheme to obtain $B^\star$\\
    \textbf{Online Optimization of $\{b_\ell\}$:} Search for solution to Problem~\eqref{eq:optimizationerrorb} with the constraint $b_1\geq b_2\geq\cdots\geq b_L$ to obtain $\{b_\ell^\star\}$\\
    \KwOut{$B^\star$, $\{b_\ell^\star\}$}
\end{algorithm}

\section{Numerical Results}
\label{sec:numerical_results}

\subsection{Simulation Settings}
\subsubsection{System and Communication Settings}
% device side
Consider a system with $K=10$ devices, the size of local data to be aggregated is $M=181,503$ scalar samples, which is the average pixel count for images in the ImageNet dataset \cite{deng_imagenet_2009}.
If not specified, the quantization precision to be used in the digital AirComp scheme is $B=6$, and the repetition factor $L=6$.
% communication channel 
The channel between the devices and the server is modeled with a $\chi$-square distribution with $\kappa=1$.
At the transmission stage, the devices are assumed to follow the channel pre-inversion technique with average transmit power $p_k=1$, corresponding to $\rho=1$. The change in noise power $\sigma_z^2$ is used to control $\gamma$.
The total symbol rate is assumed to be $10^6$ symbols per second.
\subsubsection{Benchmarking Schemes}
Two benchmark schemes are used to compare with digital AirComp. 
The first scheme, termed \emph{orthogonal access}, transmits the data on devices with traditional end-to-end digital communication. The classic orthogonal access scheme, e.g., time-division multiple access (TDMA), is employed to transmit the data on different devices.
Analog data on the device is quantized with the same bit-width $B$ as in digital AirComp.
The same channel pre-inversion technique as in digital AirComp is adopted, and the quantized data bits are reliably transmitted to the server at a capacity-achieving rate of $10^6\log_2(1+\gamma)$ bits per second.
The second benchmark scheme is the \emph{analog AirComp} with repetition coding, as described in Section~\ref{sec:analogscheme}. The same repetition factor $L=6$ is used for the analog AirComp scheme as in the digital AirComp scheme.

\subsection{Performance of Digital AirComp}

\begin{figure}
    \centering
    \includegraphics[width=\linewidth]{./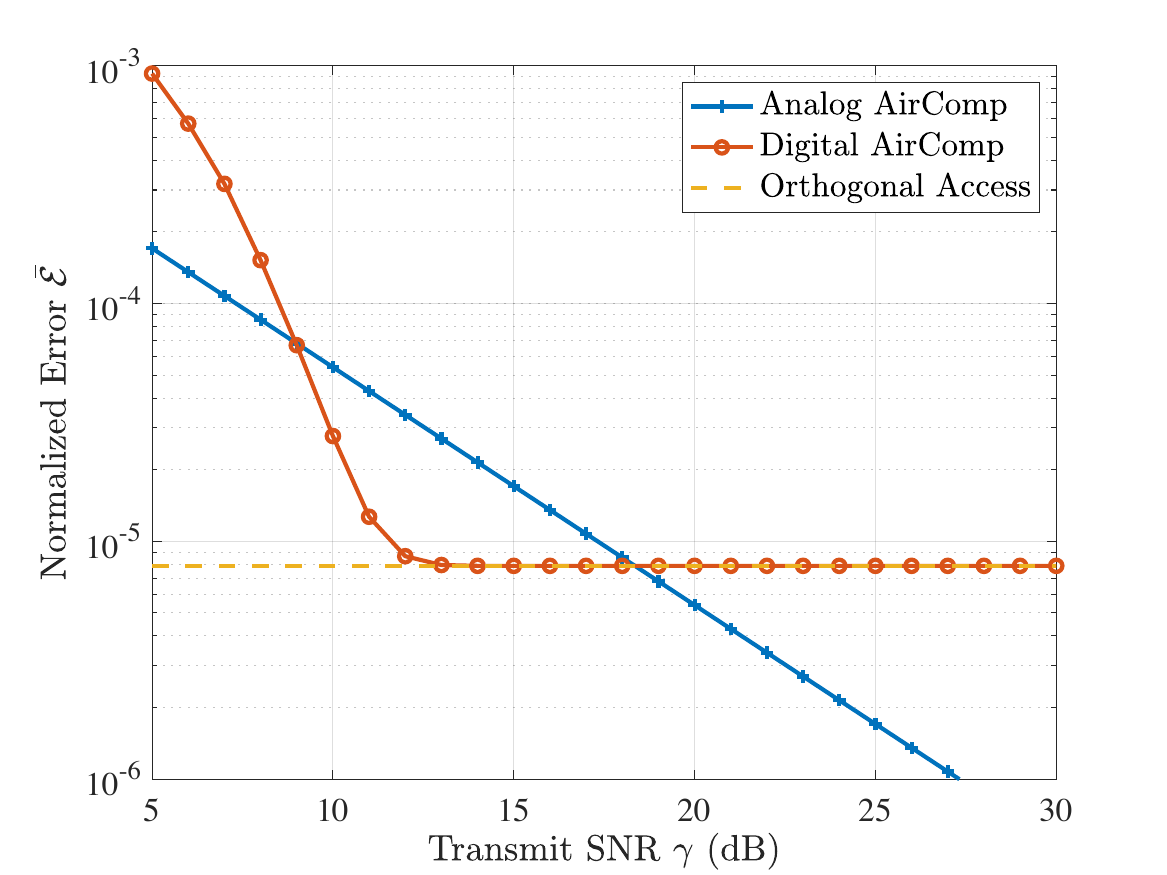}
    \caption{Comparison of data aggregation error among digital AirComp, analog AirComp and orthogonal access.}
    \label{fig:nmse_vs_snr_benchmark}
\end{figure}

We first compare the computation (i.e., data aggregation) accuracy of digital AirComp with those of the two benchmark schemes. The aggregation error is evaluated by the normalized MSE metric as defined in \eqref{eq:nMSE}. 
Fig.~\ref{fig:nmse_vs_snr_benchmark} shows that digital AirComp outperforms analog AirComp in the moderate to high SNR regime (from, approximately, $\text{9 dB}$ to $\text{18 dB}$). 
The analysis in Section~\ref{sec:error_analysis} explains this effect by revealing that the digital AirComp error decreases at an exponential rate before saturation, compared to the slow linear decrease for analog AirComp. 
One can also observe from Fig.~\ref{fig:nmse_vs_snr_benchmark} that the digital AirComp error in the high SNR regime saturates similarly as orthogonal access due to the quantization error in both digital schemes.
On the other hand, the error floor does not exist for the analog AirComp approach.

\begin{figure}
    \centering
    \includegraphics[width=\linewidth]{./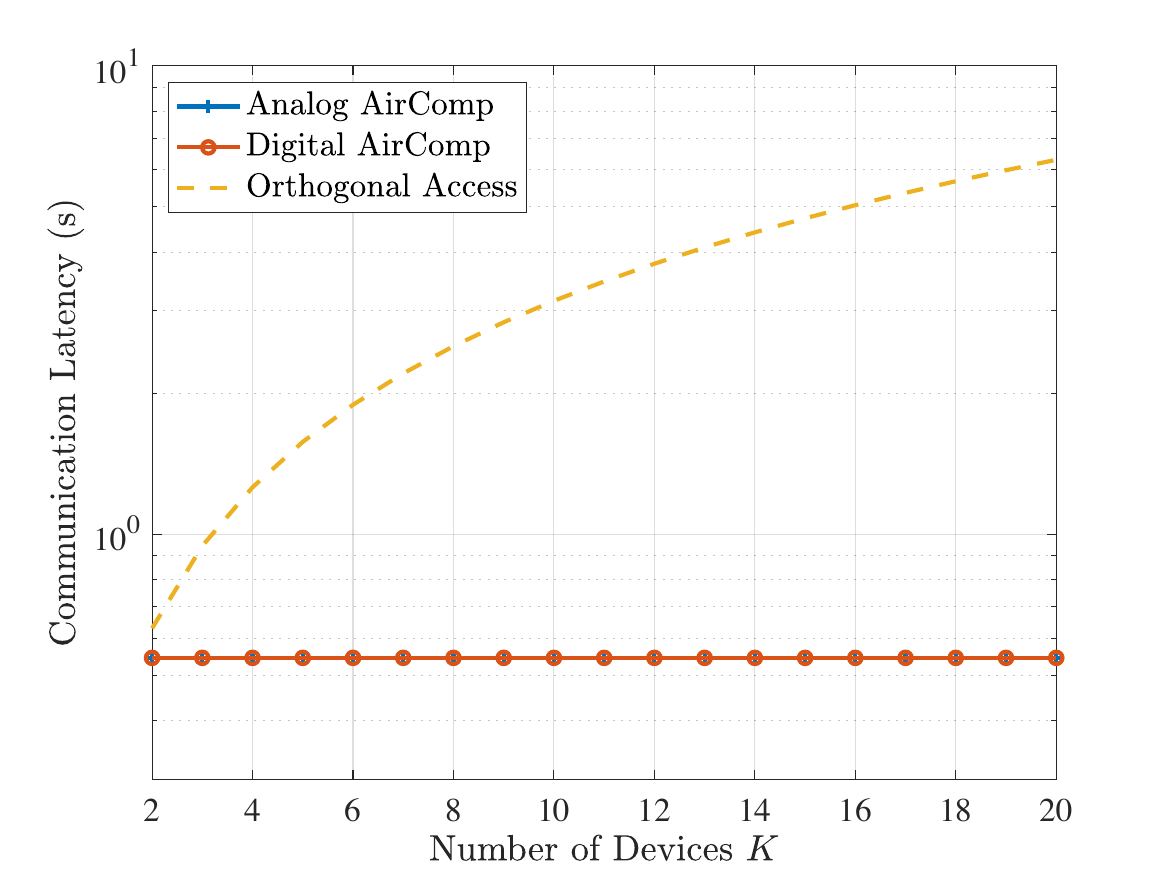}
    \caption{Comparison of latency among digital AirComp, analog AirComp and orthogonal access.}
    \label{fig:latency_vs_snr_benchmark}
\end{figure}

In Fig.~\ref{fig:latency_vs_snr_benchmark}, we compare the communication latency of digital AirComp against the benchmarking schemes. The latency is evaluated at $\gamma=\text{10 dB}$ for a varying number of devices from $\text{K = 2}$ to $\text{K = 20}$. 
The result shows that digital AirComp inherited the superior scalability of AirComp, as both AirComp schemes manage to keep the latency at a constant level, which does not scale with $K$. 
The performance gap between AirComp schemes and the orthogonal access scheme increases as $K$ grows, and reaches $10$-time at $20$ devices, demonstrating the advantage of digital AirComp as a low-latency wireless technique for distributed computing.

\begin{figure}
    \centering
    \includegraphics[width=\linewidth]{./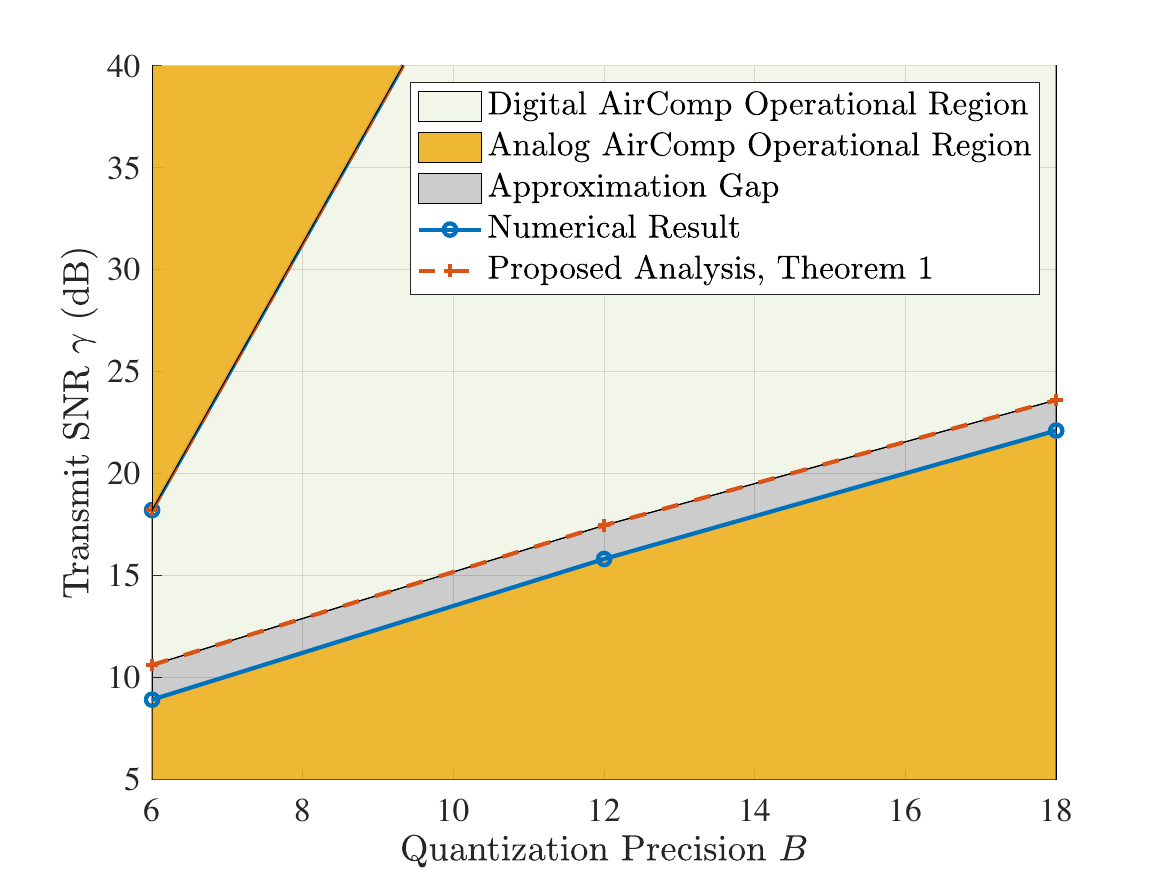}
    \caption{Operational region for digital AirComp with uniform bit-slicing and analog AirComp at different quantization precision.}
    \label{fig:snr_vs_bitWidth}
\end{figure}

Next, we compare the operational region of digital AirComp with uniform bit-slicing to analog AirComp at different quantization precisions using both numerical experiment and theoretical analysis presented in Theorem~\ref{theorem:digitalvsanalog}. 
The region is defined as the SNR range where the computation error of a given scheme is lower than that of the other scheme.
With $B = \{6,12,18\}$, and uniform bit-slicing, the respective curves are plotted in Fig.~\ref{fig:snr_vs_bitWidth}. 
It can be observed from Fig.~\ref{fig:snr_vs_bitWidth} that the operational region for digital AirComp covers the moderate to high SNR regime, while that for analog AirComp is complementary, spanning the low, and extremely high SNRs.
The operational region of digital AirComp increases with $B$, which is consistent with the theoretical analysis. 
The existence of a small approximation gap between the numerical and theoretical results can be noticed. This is because of the bounds and approximations used in the theoretical analysis.

\subsection{Performance of Adaptive Digital AirComp}
\subsubsection{Performance of Uniform-by-Best-Effort Bit-Slicing Scheme}
In Fig.~\ref{fig:adaptive_bitWidth}, we demonstrate the performance of the UBE bit-slicing scheme for various $B$. 
It can be observed from the figure that the saturation level of digital AirComp with UBE scheme decreases with $B$, which is due to the reduced quantization error. 
However, the curve slope responds to the change of $B$ in complicated ways. 
On one hand, the curve slope has no obvious change with the increment of $B$ when $B$ is relatively small (corresponding to $B=\{6,7,8\}$ in the figure). 
This is where the quantization error dominates the digital AirComp error, improvement in quantization precision significantly reduces the AirComp error. 
On the other hand, when $B$ is relatively large (corresponding to $B=\{8,9,10\}$ in the figure), the curve slope increases with $B$. This is because the additional bit-width is allocated to the more significant bits, which has more impact on the aggregation error, making them aggregation error dominated.

\begin{figure}
    \centering
    \includegraphics[width=\linewidth]{./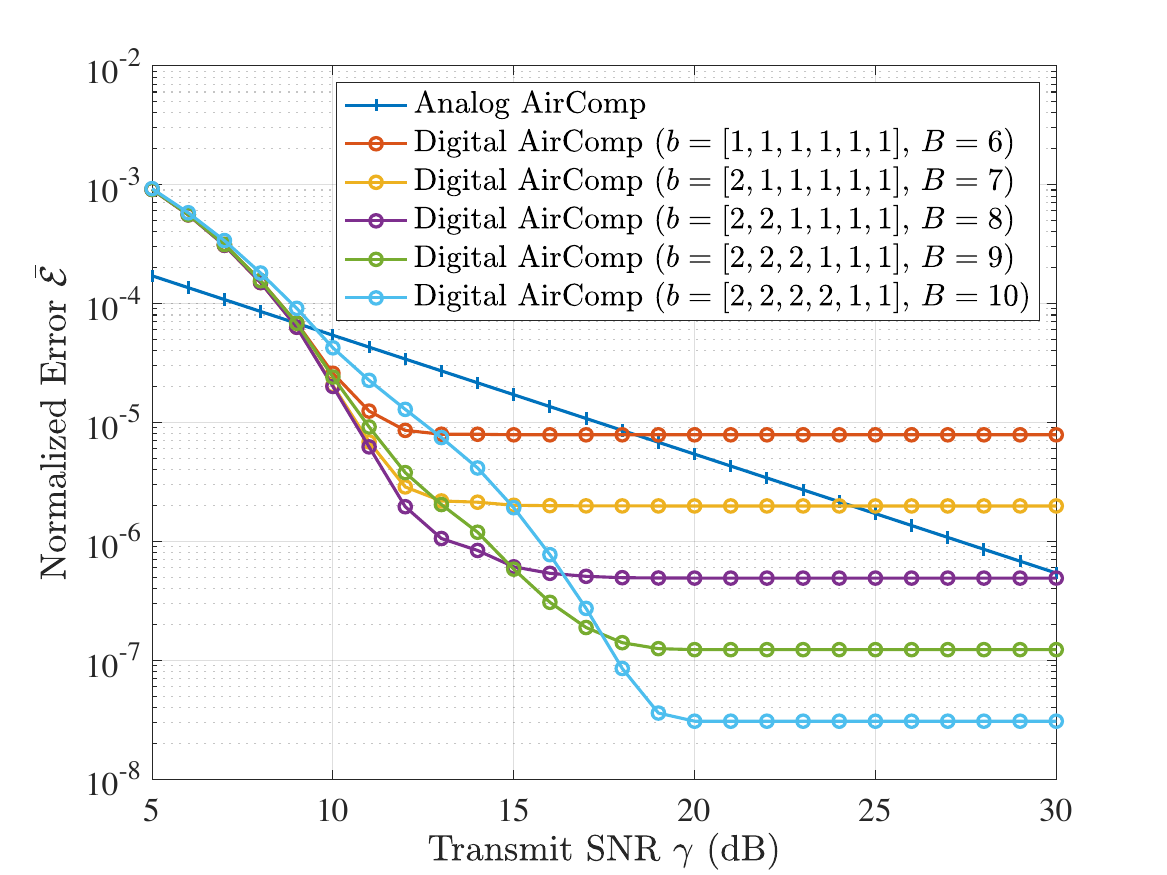}
    \caption{Normalized AirComp error of uniform-by-best-effort bit-slicing scheme.}
    \label{fig:adaptive_bitWidth}
\end{figure}

\subsubsection{Performance of Adaptive Digital AirComp}
Performance of the adaptive digital AirComp proposed in Algorithm~\eqref{alg:adaptive_digital_aircomp} is presented in Fig.~\ref{fig:optimal_bitSlicing_B10} in comparison with digital AirComp with UBE bit-slicing scheme and analog AirComp. 
Assuming $\bar{\gamma}=\text{18 dB}$, the optimal $B$ is found to be $B^\star=10$.
The optimal $\mathbf{b}$ are evaluated at integer $\gamma$ and labeled for several representative $\gamma$ values.
One can make following observations from Fig.~\ref{fig:optimal_bitSlicing_B10}: 
First, the bit-slicing scheme providing the lowest error at low SNR regime is the most unbalanced scheme (corresponding to $[5,1,1,1,1,1]$ in the figure) that prioritizes the protection of the important bits under the severe channel condition.
On the other hand, the best scheme at high SNR regime is $[2,2,2,1,1]$, which is exactly the UBE scheme introduced earlier. This scheme helps to improve the aggregation accuracy of all bits under favorable channel conditions.
As SNR increases, the optimal bit-slicing scheme gradually shifts from the unbalanced scheme to the UBE scheme.

\begin{figure}
    \centering
    \includegraphics[width=\linewidth]{./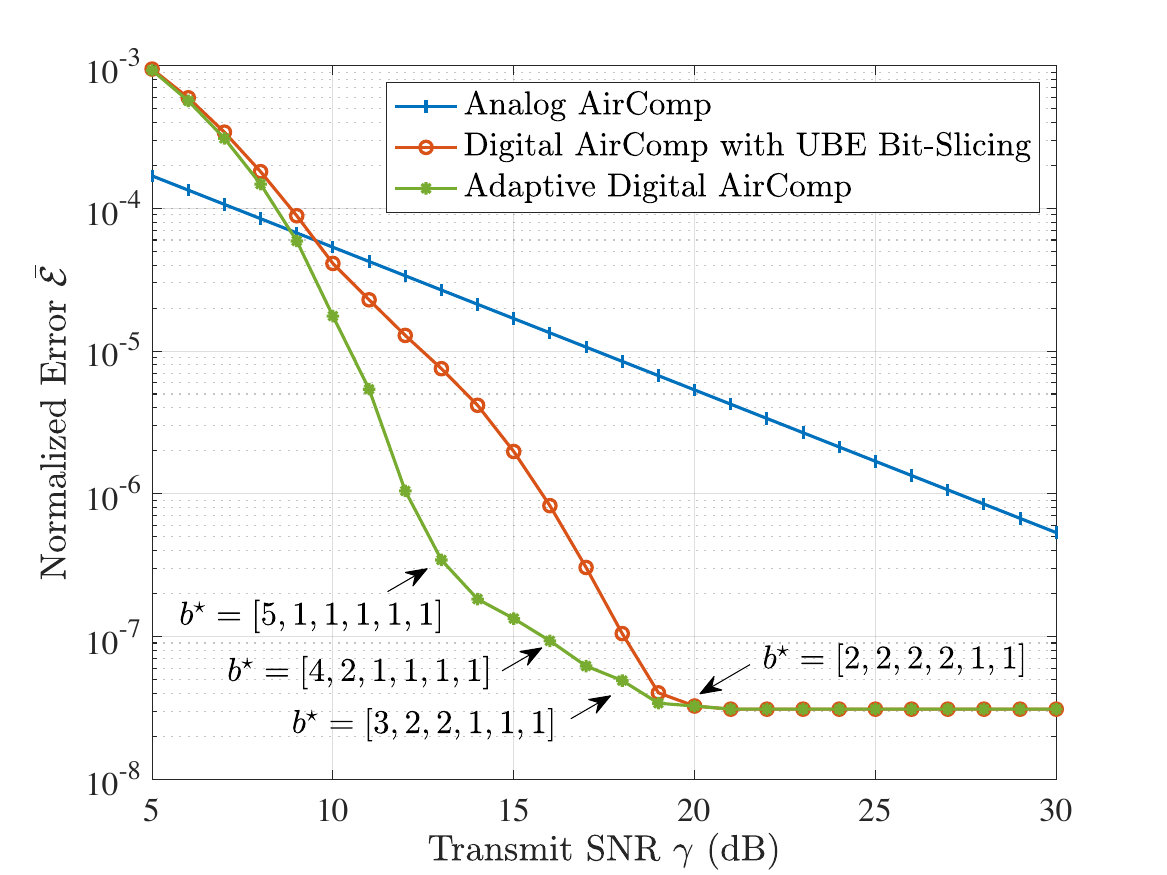}
    \caption{Performance comparison between adaptive digital AirComp, digital AirComp with uniform-by-best-effort bit-slicing and analog AirComp.}
    \label{fig:optimal_bitSlicing_B10}
\end{figure}

\section{Conclusion}
\label{sec:conclusion}
To simultaneously overcome the bottleneck in both communication latency and reliability for next-generation distributed edge networks, we proposed the framework of digital AirComp that realizes constant-time reliable aggregation of data generated on edge devices. 
The framework leverages the innate capability of QAM in translating superposition of symbols into magnitude addition of their represented values to overcome the latency bottleneck imposed by orthogonal access, and relies on optimized symbol detection to attain reliable aggregation results. 
Furthermore, a technique termed bit-slicing is applied to divide a data value with long bit width into multiple slices with short bit widths, resulting in not only the flexibility of adaptive modulation but also suppression of quantization error. Thereby, digital AirComp attains higher reliability than the traditional analog counterpart at practical SNRs.

The proposed digital AirComp framework in this work points to numerous directions for further investigations. 
For one, it is interesting to study the optimal control of the digital AirComp system by jointing exploiting the degrees-of-freedom including, for example, bit widths in the multi-type bit-slicing scheme, modulation orders and power allocation in time and frequency.
On the other hand, one benefit of digital AirComp is its compatibility with digital coding system. Mature channel coding techniques such as lattice coding can be redesigned to support coded digital AirComp to attain ever higher reliability.
Last, the flexibility of bit-level operations allows digital AirComp to seamlessly integrate with many distributed applications. For example, the importance-aware bit-slicing can be jointly optimized with feature pooling in distributed inference to realize one-shot aggregation and pooling over-the-air.

\appendix

\subsection{MAP Detection of QAM Constellation}
\label{app:QAMdetection}
The maximum a posteriori probability of superimposed QAM symbol is given by:
\begin{equation}
    \label{eq:maxaposteriori}
    \begin{aligned}
    \mathrm{max}\{p_j f(r|s_j)\} &= \mathrm{max}\{p^{(I)}_\ell p^{(Q)}_k f_r(r^{(I)}|s^{(I)}_\ell)f_r(r^{(Q)}|s^{(Q)}_k)\} \\
    &= \mathrm{max}\{p^{(I)}_\ell f_r(r^{(I)}|s^{(I)}_\ell)\}\\
    &\indent\cdot\mathrm{max}\{p^{(Q)}_k f_r(r^{(Q)}|s^{(Q)}_k)\},
    \end{aligned}
\end{equation}
where $\hat{s}^{(I)}$ and $\hat{s}^{(Q)}$ denotes the estimation of $s^{(I)}=\sum_{k=1}^K \Re(m_{k})$ and $s^{(Q)}=\sum_{k=1}^K \Im(m_{k})$, respectively, $r^{(I)}=\Re(r)$ and $r^{(Q)}=\Im(r)$ denotes the $I$ and $Q$ branch of the received signal, respectively, and $f_r(r|s)=\frac{1}{\sqrt{\pi\sigma_z^2}}\exp\left(-\frac{|r-\rho s|^2}{\sigma_z^2}\right)$ denotes the channel transition probability of the $I$ or $Q$ branch.

Note that the independence of the real and imaginary part of $s$ is a direct result of the independence of the real and imaginary part of the transmitted symbols $\{m_{k}\}$, which holds because the two sliced integers are generated from i.i.d data. Therefore, the first equality in \eqref{eq:maxaposteriori} holds. The second equality follows from the fact that $p^{(I)}_\ell f_r(r^{(I)}|s^{(I)}_\ell)\in(0,1)$ and $p^{(Q)}_k f_r(r^{(Q)}|s^{(Q)}_k)\in(0,1)$. As a result, the MAP detection of superimposed QAM symbol is equivalent to the separate MAP detection of the superimposed PAM symbols on its $I$ and $Q$ branches, i.e.,
\begin{equation}
    \hat{s}(r) = \hat{s}^{(I)}(r^{(I)})+j\hat{s}^{(Q)}(r^{(Q)}).
\end{equation}

\subsection{Proof of Lemma~\ref{lemma:exactdist}}
\label{app:exactdist}
Using the probability generating function (PGF) of $m$, which is defined by:
\begin{equation}
    \label{eq:pgfm}
    \begin{aligned} 
        \phi_{m}(t) &= \sum_{m_\ell\in\mathcal{P}}\Pr\{m=m_\ell\}t^{m_\ell}\\
        &= \frac{1}{P}\sum_{m_\ell\in\mathcal{P}}t^{m_\ell}.
    \end{aligned}
\end{equation}
Since $s$ is the sum of $K$ i.i.d random variables following the distribution of $m$, the PGF of $s$ is given by the $K$-th power of $\phi_{m}(t)$, i.e.,
\begin{equation}
    \label{eq:pgfs}
    \phi_{s}(t) = \frac{1}{P^K}\left(\sum_{m_\ell\in\mathcal{P}}t^{m_\ell}\right)^K.
\end{equation}
$\Pr\{s=s_n\}$ is the coefficient of $t^{s_n}$ in the expansion of \eqref{eq:pgfs}, coefficient of the polynomial term in \eqref{eq:pgfs} has been studied in\cite{steffen_eger_stirlings_2014}, and given by \eqref{eq:polycoef}.

\subsection{Proof of Lemma~\ref{lemma:aggregationerror}}
\label{app:detectionerror}
From \eqref{eq:detectionerror}, we study $\mathcal{E}_{det}$ from the asymptotic behavior of $P_{j\to m}(\sigma_z)$, as $\sigma_z$ approaches zero.
\begin{equation}
    \label{eq:symbolerrorlim}
    \lim_{\sigma_z \to 0} \frac{P_{j\to m}(\sigma_z)}{Q\left(\frac{\rho d}{\sqrt{2}\sigma_z}\right)}
    = \lim_{\sigma_z \to 0} \frac{Q\left(\frac{b_m^--\rho s_j}{\sigma_z/\sqrt{2}}\right)}{Q\left(\frac{\rho d}{\sqrt{2}\sigma_z}\right)}-\lim_{\sigma_z \to 0}\frac{Q\left(\frac{b_m^+-\rho s_j}{\sigma_z/\sqrt{2}}\right)}{Q\left(\frac{\rho d}{\sqrt{2}\sigma_z}\right)}.
\end{equation}
The first term in the right-hand side of \eqref{eq:symbolerrorlim} follows:
\begin{align}
    \lim_{\sigma_z \to 0} \frac{Q\left(\frac{b_m^--\rho s_j}{\sigma_z/\sqrt{2}}\right)}{Q\left(\frac{\rho d}{\sqrt{2}\sigma_z}\right)}
    &= \lim_{\sigma_z \to 0} \left(\frac{2\delta_{m,j}-d}{d}+\frac{\sigma_z^2\log\left(\frac{p_m}{p_{m-1}}\right)}{d^2\rho^2}\right) \nonumber \\
    &\indent \cdot
    \exp\left(-\frac{\delta_{m,j}(\delta_{m,j}-d)\rho^2}{\sigma_z^2}\right.\nonumber\\
    &\indent\indent+\frac{2\delta_{m,j}-d}{2d}\log\left(\frac{p_m}{p_{m-1}}\right)\\
    &\indent\indent\left.-\frac{\sigma_z^2\log^2\left(\frac{p_m}{p_{m-1}}\right)}{4d^2\rho^2}\right) \label{eq:symbolerrorlhopital} \\
    &\hspace*{-0.2\linewidth}= \begin{cases}
        \frac{2\delta_{m,j}-d}{d}\left(\frac{p_{m}}{p_{m-1}}\right)^{\frac{2\delta_{m,j}-d}{2d}}, & \text{if } (m-j)\in\{0,1\}\\
        0,              & \text{otherwise} 
    \end{cases} \label{eq:symbolerrorlim1}
\end{align}
where we omit the subscript $\ell$ from $d_\ell$ for brevity, $\delta_{m,j}=s_m-s_j=(m-j)d$ denotes the distance between $s_j$ and $s_m$, and \eqref{eq:symbolerrorlhopital} follows from the L'Hopital's rule. Similarly, the second term in the right-hand side of \eqref{eq:symbolerrorlim} follows:
\begin{align}
    \label{eq:symbolerrorlim2}
    \lim_{\sigma_z \to 0} \frac{Q\left(\frac{b_m^+-\rho s_j}{\sigma_z/\sqrt{2}}\right)}{Q\left(\frac{\rho d}{\sqrt{2}\sigma_z}\right)}&\nonumber\\
    &\hspace*{-0.2\linewidth}= \begin{cases}
        \frac{2\delta_{m,j}+d}{d}\left(\frac{p_{m+1}}{p_{m}}\right)^{\frac{2\delta_{m,j}+d}{2d}}, & \text{if } (m-j)\in\{-1,0\}\\
        0.              & \text{otherwise}
    \end{cases}
\end{align}
Substituting \eqref{eq:symbolerrorlim1} and \eqref{eq:symbolerrorlim2} into \eqref{eq:symbolerrorlim} and \eqref{eq:detectionerror} generates the asymptotic $\mathcal{E}_{det}$ as:

\begin{equation}
    \label{eq:detectionerror4}
    \lim_{\sigma_z \to 0} \mathcal{E}_{det}
    = d^2\sum_{s_j\in\mathcal{S}}\left(\sqrt{p_{j-1}p_j}+ \sqrt{p_j p_{j+1}} \right)Q\left(\frac{\rho d}{\sqrt{2}\sigma_z}\right)
\end{equation}
Substituting \eqref{eq:detectionerror4} to \eqref{eq:aggregationerror2} completes the proof.

\subsection{Proof of Theorem~\ref{theorem:digitalvsanalog}}
\label{app:digitalvsanalog}
By using the improved Chernoff bound of the Q-function, i.e., $Q(x)\leq\frac{1}{2}\exp\left(-\frac{x^2}{2}\right)$, to simplify the digital AirComp error, we derive the inequality of \eqref{eq:analogerror} and \eqref{eq:digitalerroruniform} as:
\begin{equation}
    \frac{AC}{2}\exp\left(-\frac{d^2\gamma}{4}\right)+\frac{K}{12}\leq\frac{4^B}{12L}\frac{1}{\gamma}.
\end{equation}
Rearranging the terms gives:
\begin{equation}
    \label{eq:digitalerrorlessthananalog}
    \gamma\exp\left(-\frac{d^2\gamma}{4}\right)+\frac{K}{6AC}\gamma\leq\frac{4^B}{6LAC},
\end{equation}
which matches the definition of $r$-Lambert $W$ function, that is, the solution of the equation:
\begin{equation}
    \label{eq:rlambertw}
    (x-t)\exp\left(cx\right)+r(x-t)=a
\end{equation}
is denoted by:
\begin{equation}
    x=t+\frac{1}{c}W_{re^{-ct}}(cae^{-ct}),
\end{equation}
with $t=0$, $c=-\frac{d}{4}$, $r=\frac{K}{6AC}$, and $a=\frac{4^B}{6LAC}$.

The $r$-Lambert function has three branches denoted by $W_{r,-2}$, $W_{r,-1}$ and $W_{r,0}$, respectively, when $r\in\left(0,1/e^2\right)$. The function value between $W_{r,-2}(cae^{-ct})$ and $W_{r,-1}(cae^{-ct})$ corresponds to \eqref{eq:digitalerrorlessthananalog}. This completes the proof.

\subsection{Proof of Corollary~\ref{corollary:limitofK}}
\label{app:limitofK}
The structure of $r$-Lambert function with $0<r<1/e^2$ is characterized as follows (and illustrated in Fig.~\ref{fig:lambertwithdiffr}):
\begin{itemize}
    \item Branch $W_{r,-2}$ is strictly increasing, and $W_{r,-1}$ is strictly decreasing, and the two branches are connected at point $(f_r(W_{-1}(-re)-1),W_{-1}(-re)-1)$.
    \item Branch $W_{r,0}$ is strictly increasing, and $W_{r,-1}$ and $W_{r,0}$ connects at point $(f_r(W_{0}(-re)-1),W_{0}(-re)-1)$.
\end{itemize}
Where $f_r(x)=xe^x+rx$. To guarantee both boundaries of \eqref{eq:snrregime} exist, the following condition is necessary:
\begin{equation}
    f_r(W_{0}(-re)-1) < -\frac{d^2}{4}\frac{4^B}{3LAC} < f_r(W_{-1}(-re)-1).
\end{equation}
While the first inequality holds for any $K>0$. The second inequality holds when:

\begin{align}
    -\frac{d^2}{4}\frac{4^B}{6LAC} & < -\frac{1}{8L} \label{eq:appEfirst} \\
    & < r\left[W_{-1}\left(-re\right)+\frac{1}{W_{-1}\left(-re\right)}-2\right] \\
    & = f_r(W_{-1}(-re)-1)
\end{align}
where \eqref{eq:appEfirst} follows from $C<\frac{4^B}{4^b-1}$ and $A<2$.

\bibliographystyle{IEEEtran}
\bibliography{references.bib}

\end{document}